\documentclass{article}
\usepackage{amsmath,amssymb,amsfonts,amsthm}
\usepackage{a4wide}
\usepackage{color}
\usepackage{float}
\usepackage{comment}
\usepackage{authblk}
\usepackage{graphicx} 
\usepackage{cases}

\usepackage{thmtools}
\usepackage{thm-restate}

\def\R{\mathbb{R}}
\def\E{\mathbb{E}}
\def\P{\mathbf{P}}
\def\H{\mathbf{H}}

\newcommand{\RV}{\textcolor{black}}

 \newcommand{\red}{\textcolor{black}}

\newtheorem{remark}{Remark}
\newtheorem{lemma}{Lemma}

\newtheorem{proposition}{Proposition}
\newtheorem*{proposition*}{Proposition}

\title{Model Reduction of Multivariate Geometric Brownian Motions and Localization in a Two-State Quantum System}
\author[$\dagger$]{C. Chen}
\author[$\star$]{M. Colangeli\thanks{Corresponding author: \texttt{matteo.colangeli1@univaq.it}}}
\author[$\dagger$]{M. H. Duong}
\author[$\star$]{M. Serva}

\affil[$\dagger$]{School of Mathematics, University of Birmingham, UK. }
\affil[$\star$]{Department of Information Engineering, Computer Science and Mathematics, University of L’Aquila, L’Aquila, Italy.}

\date{\today}

\begin{document}

\maketitle
\begin{abstract}
We develop a systematic framework for the model reduction of multivariate geometric Brownian motions (GBMs), a fundamental class of stochastic processes with broad applications in mathematical finance, population biology, and statistical physics. Our approach leverages the interplay between the method of invariant manifolds and adiabatic elimination to derive closed-form reduced equations for the deterministic drift. An extended formulation of the fluctuation-dissipation theorem is subsequently employed to characterize the stochastic component of the reduced description. As a concrete application, we apply our reduction scheme to a GBM arising from a two-state quantum system, showing that the reduced dynamics accurately capture the localization properties of the original model while significantly simplifying the analysis.
\end{abstract}

\section{Introduction}
 
Complex systems are ubiquitous across the physical, biological, and social sciences, encompassing phenomena as diverse as turbulent fluid flows, neural networks, crowd dynamics, and opinion formation. Such systems are frequently modeled via coupled stochastic differential equations (SDEs), which characterize the temporal evolution of numerous interacting microscopic constituents (such as particles, spins, or agents) often subject to external environments or reservoirs. The stochastic terms in these equations account for the intrinsic uncertainty in the mechanisms driving the microscopic dynamics, which may stem from environmental fluctuations, measurement noise, or unresolved degrees of freedom \cite{bianconi2023complex}.
This modeling approach proves especially relevant in fields such as climate dynamics, for example in the seminal work of K. Hasselmann \cite{hasselmann1976stochastic} and the recent comprehensive review \cite{lucarini2023theoretical}, where the stochastic framework provides a basis for characterizing large-scale behaviors emerging from complex small-scale interactions.

Despite their descriptive accuracy, SDE-based models pose significant analytical and computational challenges, primarily due to the high dimensionality inherent in systems with a large number of interacting variables. This necessitates model reduction strategies that approximate the original system with a lower-dimensional surrogate, preserving the essential features of the dynamics while simplifying the analysis and reducing computational costs.

A key early impetus for model reduction came from statistical mechanics, most notably through N. N. Bogoliubov's foundational work in the kinetic theory of gases \cite{Bogol}, which sought to rigorously derive fluid dynamics from the Boltzmann equation. In a similar vein, the Chapman–Enskog expansion provided a systematic framework for constructing closed-form hydrodynamic equations and transport coefficients from kinetic theory \cite{chapman1990mathematical}. A further major development was the Mori–Zwanzig formalism \cite{mori1965transport, zwanzig1960ensemble}, which employs projection operators to reduce the Liouville equation to a generalized Langevin equation, effectively capturing the dynamics of a selected set of coarse-grained (or resolved) variables. These approaches laid the groundwork for modern reduction methods centered on memory kernels, fluctuation-dissipation relations, and coarse-graining techniques \cite{chorin2000optimal, majda2001mathematical, Vulp}.
Over the past two decades, the field of model reduction has matured into a rich and interdisciplinary area, extending far beyond its roots in statistical physics and finding applications in domains such as climate modeling \cite{hasselmann1976stochastic, lucarini2023theoretical}, systems biology \cite{snowden2017methodologies}, control theory \cite{antoulas2005approximation}, chemical kinetics \cite{lam1994dynamic, roberts2008model}, and fluid dynamics \cite{rowley2004model}. Comprehensive reviews and frameworks have further formalized the mathematical foundations and computational strategies for model reduction; see, for example, \cite{givon2004extracting, gorban2006model, benner2017model, kevrekidis2003equation, ott2005low}.

A widely adopted strategy in model reduction involves projecting high-dimensional Markovian dynamics onto a lower-dimensional manifold, typically defined in terms of a subset of ``slow'' or ``collective'' variables \cite{hartmann}. However, this dimensionality reduction often introduces memory effects, resulting in non-Markovian dynamics for the reduced model \cite{Zwanzig}.
The Markov property can be preserved in special cases, such as fast–slow systems, which are characterized by a clear time scale separation and exhibit a well-defined dynamical structure. On short timescales, fast variables rapidly relax toward a neighborhood of a lower-dimensional slow manifold. Subsequently, the system evolves along this manifold over much longer timescales until reaching a stationary state \cite{haken1983synergetics}. In the limit of infinite separation, one can rigorously derive reduced Markovian models for the slow variables alone \cite{Zwanzig, Ghil}, as systematically explored in works such as \cite{pavliotis2008multiscale, gorban2006model}. However, many real-world systems lack a clear time scale separation, prompting the development of alternative model reduction techniques suitable for such settings \cite{TonyRoberts}.
Several recent contributions have addressed this challenge. For instance, \cite{Wouters2019, Wouters2019b} propose reduction techniques based on the Edgeworth expansion, applicable to both deterministic and stochastic systems with moderate time scale separation. Meanwhile, \cite{Checkroun1, Checkroun2} leverage Ruelle–Pollicott resonances to develop reduced models for weakly separated regimes. Other studies, including \cite{Legoll2010, Lu2014, zhang2016effective, duong2017variational, Duong2018, Legoll2019, Lelievre2019, Hartmann2020, duong2025coarse}, focus on diffusion processes and employ conditional expectations to build reduced models, avoiding the need for explicit scale separation.

An especially elegant and geometrically motivated approach to model reduction is the method of the \emph{invariant manifold} (IM), which has seen renewed interest in recent years.  
\RV{Originally formulated within the framework of Kolmogorov–Arnold–Moser (KAM) theory for perturbed integrable Hamiltonian systems~\cite{Kol,Arn,Mos}, this method leverages the structure of a dynamical system to identify low-dimensional manifolds that remain approximately invariant under the flow.
The method proceeds by solving a system of invariance equations which, from a geometric standpoint, ensure that the vector field of the original system remains tangent to the manifold at every point. In the context of kinetic theory, the IM method was later adapted to derive closed-form hydrodynamic equations from the Boltzmann equation~\cite{chapman1990mathematical,Gor04,colan08,colan09}, enabling an exact resummation of the Chapman–Enskog expansion~\cite{GorKar,Kar02,CM22}. This approach effectively prevents the spurious divergences in dispersion relations that are typically encountered in high-order truncations~\cite{Bobylev,colan07}.
Recently, the method has been successfully applied across diverse disciplines, including the reduction of complex reaction networks in chemical kinetics~\cite{Gor2018}, climate modeling~\cite{Gin2014}, and the dimensional reduction of stochastic differential equations~\cite{CM22,CDM22,colangeli2025hybrid}.}
\RV {A recent advancement in this field} is the hybrid model reduction framework introduced in \cite{CM22, CDM22, CDM23, colangeli2025hybrid}, targeting linear SDEs with additive noise. The approach proceeds in two stages. First, the deterministic component of the dynamics is reduced using the invariant manifold method. Stochasticity is then incorporated into the contracted description through the fluctuation-dissipation \red{theorem, which links the diffusion matrix, the drift matrix, and the stationary \RV{second-moment} matrix \cite{Zwanzig,Vulp,Vulp09,Pavl}, so as to preserve the marginal stationary distribution of the full system \footnote{\red{Following Kubo \cite{Kubo66}, we refer here to the so called ``fluctuation-dissipation theorem of the \textit{second kind}'', which relates the friction coefficient in the generalized Langevin equation to the properties of the noise, see also \cite{Maes14}. By contrast, the ``fluctuation-dissipation theorem of the \textit{first kind}'' expresses the mobility coefficient in terms of the velocity correlation function and therefore constitutes a prototypical example of Linear Response Theory \cite{Kubo,Hang82,Col12}.}}}.
Notably, this procedure does not rely on the presence of a time scale separation, \red{that is, the original system is not required to exhibit a slow-fast structure}, which makes the method applicable to a broader class of dynamical systems.

In this paper, we extend the hybrid IM method to a new and widely applicable class of models, namely multivariate \emph{geometric Brownian motions} (GBMs). These are SDEs in which both the drift and diffusion terms scale linearly with the state, leading to multiplicative noise. GBMs play a foundational role in fields such as mathematical finance, population biology, and statistical physics \cite{mao2007stochastic, kloeden2013numerical}. However, their multiplicative noise structure precludes the direct use of the additive-noise reduction techniques developed in \cite{CM22, CDM22, CDM23}. Additionally, the conditional expectation method is not applicable to GBMs, as they generally do not admit an \RV{absolutely continuous (with respect to Lebesgue measure)} invariant measure, which is a key assumption in that approach.

Our main contribution in this paper is threefold:
(i) we extend the hybrid IM method to handle GBMs with multiplicative noise, thereby broadening its applicability to nonlinear stochastic systems of practical relevance;
(ii) we introduce a complementary model reduction strategy based on the analysis of higher-order ODEs governing the evolution of low-order statistical moments, offering analytical insight into the structure of the reduced dynamics;
(iii) we highlight the effectiveness of both reduction techniques on a GBM model derived from the study of localization phenomena in a two-state quantum system, originally proposed in~\cite{blanchard1994localization}. This application shows that our method successfully captures essential dynamical features of the model, such as localization behavior, in the strong-noise regime. \red{The fact that the contracted description retains the localization effects observed in the microscopic dynamics represents a significant advancement in the use of the IM method beyond its traditional domain of kinetic theory. In that context, the IM method is designed to ensure two fundamental properties of the reduced dynamics, namely the preservation of conservation laws and the preservation of dissipation \cite{GorKar,Stasenko24}. Our analysis demonstrates that, when the IM method is carefully combined with the fluctuation-dissipation theorem for GBMs, the guiding principles of the reduction scheme can be extended, allowing localization in quantum systems to naturally emerge as an additional physical observable that can be systematically preserved.}

Related works on model reduction for GBMs include \cite{benner2011lyapunov, damm2014balanced, benner2015model}, where the balanced truncation method is employed. These approaches also rely on  Lyapunov-type equations associated with GBMs (see e.g. Eq.~\eqref{eq: stationary covariance matrix} below).

The paper is organized as follows. In Section~\ref{sec: two-state system}, we revisit the two-state quantum system from Ref.~\cite{blanchard1994localization}, which provides the physical motivation for our reduction procedure. Notably, the authors in~\cite{blanchard1994localization} map the Schr\"odinger equation to a system of SDEs with purely geometric noise, naturally framing the model within the GBM formalism. Sections~\ref{sec: higher order ODEs} and~\ref{sec: IM method} then develop two reduced models for the quantum GBM: one based on higher-order ODEs, and the other on the IM method. Section~\ref{sec:reddyn} highlights the key properties of the resulting contracted description and provides the corresponding error estimates. Conclusions are drawn in Section~\ref{sec: conclusion}, while technical derivations and detailed computations are deferred to the Appendices. \RV{Specifically, Appendix~\ref{app:appA} addresses a two-variable formulation of the two-state quantum system, and Appendix~\ref{sec: GBMs} reviews relevant properties of multivariate GBMs, discussing model reduction strategies for cases where the reduced dynamics is multidimensional. Appendix~\ref{app:app2} provides the calculation of the eigenvalues of the matrix M (as defined in \eqref{eq: 4ODEs}) utilized in the preceding sections. Furthermore, Appendix~\ref{app:app3} derives the exact fourth-order linear ODE for the second moment of the reduced dynamics, while Appendix~\ref{app:app4} outlines an algebraic interpretation of the invariance equations. Finally, Appendix~\ref{app:app6} is devoted to the proofs of technical lemmas, and Appendix~\ref{app: L2norm} provides suitable error estimates.}

\section{A GBM from a two-state quantum system}
\label{sec: two-state system}

In this section, we revisit the two-state quantum system originally introduced in~\cite{blanchard1994localization}. This model exhibits particularly interesting physical behavior, characterized by partial transition inhibition and loss of periodicity in the strong noise regime. The Schr\"odinger equation for this system can be mapped to a set of SDEs equipped with purely geometric noise terms, making it an ideal test case for our proposed methodology.

\subsection{A two-state quantum system}
\label{sec: two state system}
Consider a system with two classical equilibrium configurations, the typical example being the double-well problem. A classical object with minimum energy is located in one of the two wells, while the ground state of a quantum object is symmetric and, therefore, delocalized. If one prepares a quantum state concentrated in one of the wells, for example by means of a wave function that is a superposition of the (symmetric) ground state and the first excited state (antisymmetric), the system will move coherently and periodically from one well to the other and, therefore, permanent localized states are not possible. The frequency will be $\omega= \frac{\Delta E}{\hbar}$ where $\Delta E$
is the difference between the energy of the first excited state and that of the ground state.

The simplest toy model that can illustrate this behavior can be constructed by means 
of the (spin) Hamiltonian 
\begin{equation}
\hat H= \alpha \hat \sigma_x,
 \label{h}
\end{equation}
where \RV{$\hat\sigma_x$ is the first of the three Pauli matrices}
\RV{
\[
\hat\sigma_x=\begin{pmatrix}
    0&1\\1&0
\end{pmatrix},\quad\hat\sigma_y=\begin{pmatrix}
    0&-i\\
    i&0
\end{pmatrix},\quad \hat\sigma_z=\begin{pmatrix}
    1&0\\
    0&-1
\end{pmatrix},
\]
}
which applies on a complex-valued,
two component, spinor 
\begin{equation}
\boldsymbol \psi =
 \begin{pmatrix}
\psi_l \\
\psi_r
\end{pmatrix}  .
 \label{hg}
\end{equation}
The Schr\"odinger equation can be written as
\begin{equation}
\frac{d \boldsymbol \psi}{dt} = - i\hat H \boldsymbol\psi,
\end{equation}
where, without loss of generality,  we have set $\hbar=1$.
It follows immediately
\begin{equation}
\frac{d\boldsymbol\psi }{dt} =-i\alpha \hat \sigma_x \boldsymbol\psi
\,\,\,\,\,\, \RV{\Longleftrightarrow}\,\,\,\,\,\,\,
\frac{d }{dt} 
 \begin{pmatrix}
\psi_l \\
\psi_r
\end{pmatrix}  
= - i\alpha 
 \begin{pmatrix}
0 &  1\\
1 &  0
\end{pmatrix} 
 \begin{pmatrix}
\psi_l \\
\psi_r 
\end{pmatrix},
 \label{hsx}
\end{equation}
where $|\psi_l(t)|^2$ is the probability that the system is found (by a measurement)
in the left well at time $t$ and  $|\psi_r(t)|^2$ is the probability that it is found in the right one
(the Schr\"odinger equation preserves normalization so that the scalar product 
$|\boldsymbol \psi(t)|^2 = |\psi_l(t)|^2+|\psi_r(t)|^2$  can be set to be
equal to 1 at all times).

The only two (normalized to 1) eigenstates of the Hamiltonian are 
\begin{equation}
\boldsymbol \psi_0 =
\frac 1 {\sqrt 2}
 \begin{pmatrix}
\,\, \,\, 1\\
-1
\end{pmatrix}  
\,\,\,\,\,\,\,\,\, {\rm and} \,\,\,\,\,\,\,\,\,
\boldsymbol \psi_1 =
\frac 1 {\sqrt 2}
 \begin{pmatrix}
1\\
1
\end{pmatrix} ,
 \label{eig}
\end{equation}
where $\boldsymbol \psi_0 $ is the ground state with energy $E_0=- \alpha$ 
and $\boldsymbol \psi_1 $
is the excited state with energy $E_1=\alpha$. Note that neither state is localized, on the 
contrary, the probability of finding the system in a given hole by measurement is half for 
both states. If the initial state is prepared so that it is localized at time t = 0, for example in the left well
which means $\boldsymbol \psi (0) =(\boldsymbol \psi_0 +\boldsymbol \psi_1) /\sqrt 2
= (1,0)$, its Schr\"odinger evolution at time $t$ is
\begin{equation}
\boldsymbol \psi (t) = \frac 1 {\sqrt 2} \left( \boldsymbol \psi_0 e^{-i \alpha t} +
\boldsymbol \psi_1 e^{i \alpha t} \right) =
 \begin{pmatrix}
\cos ({\alpha t} )\\
\sin({\alpha t} )
\end{pmatrix} .
 \label{sup}
\end{equation}
One has, therefore, $|\psi_l(t)|^2= |\cos ({\alpha t} )|^2$ and  $|\psi_r(t)|^2= |\sin({\alpha t}) |^2$ 
which are periodic with frequency 
$\omega=2\alpha=  E_1 \! -\!  E_0  = \Delta E/\hbar$. This means that the system 
is perfectly localized in the left well at any time $t=\pi n/  \alpha$ and 
it is perfectly localized in the right well at any time $t=\pi (n+1/2)/ \alpha$. 

We pose the following question: why is such coherent behavior not observed in mesoscopic systems? The standard answer is that no system is ever truly isolated, and once interactions with the environment are taken into account, the phenomenology changes significantly. The simplest way to introduce interactions with the environment is to assume that they are numerous but individually weak, possibly arising from various sources. Starting from the toy model described above, a natural extension is to include an additional random interaction term to account for this environmental coupling. The Hamiltonian (\ref{h}) is consequently replaced by
\begin{equation}
\hat H= \alpha \hat \sigma_x +  \beta \eta(t)  \hat \sigma_z ,
 \label{h2}
\end{equation}
where \RV{$\hat\sigma_z$} is the Pauli matrix and $\eta(t)$ is a white noise. 
The Schr\"odinger equation becomes
\begin{equation}
 d\boldsymbol\psi  =-i\alpha \hat \sigma_x \boldsymbol\psi dt
-i \beta  \hat \sigma_z  \boldsymbol\psi  \, dW_t- \frac 1 2 \beta^2 \boldsymbol\psi dt
  \label{hsxb}
\end{equation}
which, in a more explicit form, can be rewritten as
\begin{equation}
d
 \begin{pmatrix}
\psi_l \\
\psi_r
\end{pmatrix}  
= - i\alpha 
 \begin{pmatrix}
0 &  1\\
1 &  0
\end{pmatrix} 
 \begin{pmatrix}
\psi_l \\
\psi_r 
\end{pmatrix}
dt 
- i \beta 
 \begin{pmatrix}
1 &  0\\
0 &  -1
\end{pmatrix} 
 \begin{pmatrix}
\psi_l \\
\psi_r 
\end{pmatrix}
dW_t
- \frac 1 2 \beta^2
 \!  \begin{pmatrix}
\psi_l \\
\psi_r 
\end{pmatrix}
dt,
  \label{hsx2}
\end{equation}
where  $W_t= \int_0^t \eta(s) ds$ is the standard one-dimensional Wiener process and 
where the last term appears
because we use  It${\hat {\rm o}}$'s notation (if Stratonovich notation were used instead, 
this would be absent).

It can be directly verified that the Schr\"odinger equation preserves normalization. In fact,
having defined the row spinor $\boldsymbol \psi^+\! =(\psi_l^*,  \, \psi_r^*)$,
one has 
\begin{equation}
 d\boldsymbol\psi^+ =i\alpha \boldsymbol \psi ^+\hat \sigma_x \, dt
+ i \beta  \boldsymbol \psi^+  \hat \sigma_z  dW_t - \frac 1 2 \beta^2 \boldsymbol \psi^+dt,
  \label{hsxbbis}
\end{equation}
then, the scalar product  $\boldsymbol \psi^+\!(t)  \boldsymbol \psi (t)
= |\boldsymbol \psi(t)|^2 = |\psi_l(t)|^2+|\psi_r(t)|^2$ is constant,
as it can be verified using (\ref{hsxb}) and  (\ref{hsxbbis}),
It${\hat {\rm o}}$'s calculus and the commutation and anti-commutation properties of Pauli matrices.
Assuming unitary normalization, one can set $ |\boldsymbol \psi(t)|^2 =1$.

\subsection{The process: cartesian coordinates}

Given that the wave function is a complex object, the equality in (\ref{hsx2})
is a system of four coupled linear equations for four real variables
(the real and imaginary parts of $\psi_l $ and $\psi_r$).

We have seen that $\boldsymbol \psi^+\!(t)  \boldsymbol \psi (t)
=|\boldsymbol \psi(t)|^2 = |\psi_l(t)|^2+|\psi_r(t)|^2=1$ 
at all times, therefore, the number of really independent variables is at most three,
accordingly, we would like to rewrite (\ref{hsx2}) as a system of three equations for
the same number of variables. To reach this goal we define
 $ \bf x =\boldsymbol \psi^+ \! \boldsymbol {\hat \sigma} \boldsymbol \psi $,
 where $\boldsymbol {\hat \sigma} = \RV{(\hat\sigma_x ,\hat\sigma_y , \hat\sigma_z )}$. By components:

\begin{equation}
\begin{aligned}
x&=\boldsymbol \psi^+ \!  \hat \sigma_x \boldsymbol \psi 
= 2 {\rm Re } (\psi_l^* \psi_r), \\    
y&=\boldsymbol \psi^+ \! \hat \sigma_y \boldsymbol \psi 
= 2 {\rm Im } (\psi_l^* \psi_r) , \\
z&=\boldsymbol \psi^+ \!  \hat \sigma_z \boldsymbol \psi 
 = |\psi_l|^2 -|\psi_r|^2,
\end{aligned} 
\label{xyz}
\end{equation}
where the variable $z$ encodes the information abut localization.
In fact, given that $|\psi_l|^2 +|\psi_r|^2 =1$  one has $z \in [-1,1]$, therefore,
when $z=1$ the system is localized in the left well ($|\psi_l|^2= 1, |\psi_r|^2 =0$)
and  when  $z=-1$ it is localized in the right well ($|\psi_l|^2= 0, |\psi_r|^2 =1$).

By means of equations (\ref{hsxb}) and  (\ref{hsxbbis}), 
by the commutation and anti-commutation properties of the Pauli matrices 
and by It${\hat {\rm o}}$'s calculus one obtains the following equations:
\begin{equation}
\begin{aligned}
dx&=-2\beta^2 x\,dt-2\beta y \, dW_t , \\    
dy&=-2\beta^2 y\,dt-2\alpha z \,dt+2\beta x\, dW_t, \\
dz&=2\alpha y\, dt,
\end{aligned} 
\label{meq}
\end{equation}
which represent the superposition of two rotations, one (stochastic) around
the $z$ axis and another (deterministic) around the the $x$ axis.
As a consequence, the motion is confined on the surface of a sphere of 
constant radius. In fact, by It${\hat {\rm o}}$'s calculus one gets
\begin{equation}
\begin{aligned}
dx^2&= 2x dx + (dx)^2= -4\beta^2 x^2 \,dt-4\beta xy \, dW_t +  4 \beta^2 y^2 dt, \\    
dy^2&=2y dy+ (dy)^2=-4\beta^2 y^2 \,dt-4\alpha zy \,dt+4\beta xy \, dW_t +  4 \beta^2 x^2 dt, \\
dz^2&= 2zdz = 4\alpha zy \, dt.
\end{aligned} 
\label{eq: SDE2}
\end{equation}
It is immediate to verify that
\begin{equation}
d \rho^2 = d(x^2\!+\!y^2\!+\!z^2)=0,
\label{eq: SDE3}
\end{equation}
which means that the radius is constant. For later use it is also useful to write down the equation for the variable $zy$:
\begin{equation}
d(zy)= ydz+zdy=  2\alpha y^2 \, dt -2\beta^2 z y\,dt-2\alpha z^2 \,dt+2\beta xz\, dW_t, 
\label{eq: SDE4}
\end{equation}
in fact, if one takes the averages of (\ref{eq: SDE2}) and (\ref{eq: SDE4}) one ends-up with 
a closed system of differential deterministic equations.

In Appendix \ref{app:appA} we also provide an alternative two-variable formulation
of \eqref{meq} either adopting spherical coordinates or using
the spin conservation.
\subsection{Averages and interpretation}

We compute some averages which may highlight the behavior of the system.
From the stochastic equations (\ref{meq}) we obtain the following system 
of differential equations for the averaged variables:
\begin{equation}
\frac {d }{dt} \E[ x] =-2\beta^2  \E[  x]  , \,\,\,\,\,\,\,\, \,\,\,
\frac {d}{dt} \E[  y]  =-2\beta^2  \E[  y]   -2\alpha  \E[\RV{z}]   , \,\,\,\,\,\,\,\,\, \,\,\,
\frac {d }{dt} \E[  z]  =2\alpha  \E[  y],
\label{aver}
\end{equation}
where $ \E[  \cdot]  $ denotes the average with respect to the noise.
The characteristics of the solution of this system changes depending on 
the values of the parameters. When $\beta^2 < 2 | \alpha| $
(the small noise region) the solution, given a determined initial conditions
 $ \E[   {\bf x} (0) ]= {\bf x} (0) $, is
\begin{equation}
\begin{aligned}
 \E[  x(t)]  & = e^{-2 \beta^2 t} x(0),\\    
 \E[  y(t)]  & = e^{- \beta^2 t} [ y(0) \cos(\omega t)+ c_1  \sin (\omega t) ],\\
 \E[  z(t)]  & = e^{- \beta^2 t} [z(0) \cos(\omega t)+ c_2  \sin (\omega t) ],
\end{aligned} 
\label{solaver}
\end{equation}
where 
\begin{equation}
\omega  = |4 \alpha^2 - \beta^4|^{\frac 1 2}, 
\,\,\,\,\,\,\,\, \,\,\,
c_1= -\frac{1}{\omega} \, [ \beta^2 y (0) +2 \alpha  z (0)]
\,\,\,\,\,\,\,\,\, { \rm and } \,\,\,\,\,\,\,\,
c_2= \frac{1}{\omega} \, [ \beta^2 z (0) +2 \alpha y (0)]. 
\label{param}
\end{equation}
Looking at the third component z(t), one realizes that in this \red{small noise} region a
quantum coherent behavior survives the noise. The localization probability
is periodic with exponential damping: the system jumps from one state to
the other almost periodically. The damping factor, in fact, is a consequence
of the accumulation of errors due to the deviations from the purely periodic
behavior. The damping rate, that is $\beta^2$ for $y$ and $z$
and it is $2\beta^2$ for $x$,  increases with noise.

When $\beta^2 > 2 | \alpha| $ the solution is purely exponential and it may be obtained
from (\ref{solaver}) with the substitutions 
$\cos(\omega t) \to \cosh(\omega t)$, $\sin(\omega t) \to \sinh(\omega t)$.
In this large noise region the coherent behavior is completely shattered
since the localization probability relaxes exponentially. Therefore, the
system jumps randomly from one state to the other and it loses memory 
of its initial position with (large) rate $\beta^2$ for $x$ and with 
(small) rate $ \beta^2-|4 \alpha^2 - \beta^4|^{\frac 1 2}$  for $y$ and $z$.
One has
\begin{subequations}
\label{solaver2}
\begin{align}
 \E[  x(t)]  & = e^{-2 \beta^2 t} x(0) ,\label{eqxav}\\    
 \E[  y(t)]  & = e^{- \beta^2 t} [ y(0) \cosh(\omega t)+ c_1  \sinh (\omega t) ]
\simeq  -\frac{\alpha^2}{\beta^4} y(0) e^{-\frac{2 \alpha^2}{\beta^2}t},\label{eqyav}\\
 \E[  z(t)]  & = e^{- \beta^2 t} [z(0) \cosh(\omega t)+ c_2  \sinh (\omega t) ]
\simeq  z(0) e^{-\frac{2 \alpha^2}{\beta^2}t},\label{eqzav}
\end{align} 
\end{subequations}
where \RV{$\simeq$ denotes an approximation up to the leading order term in the extremely large noise regime $\beta^2 \gg 2|\alpha|$}.

If one compares the third equation in (\ref{solaver}) with the third in (\ref{solaver2}) 
one can appreciate the fact that in the small noise regime the localization is lost at a rate which increases with $\beta$,
but after transition to the large noise regime it decreases with $\beta$. 
Moreover, in the large noise regime, the memory of the initial values $x(0)$ and $y(0)$ is rapidly lost, 
while the memory of the initial value $z(0)$ persists longer.
This means that for large but finite $\beta$, 
the system experiences a fast distribution over the parallel fixed 
by the initial condition $z(0)$ followed by a slow distribution
over the other latitudes.
In the extremely large noise limit ($\beta \to \infty$) the system  
distributes instantaneously and uniformly over the parallel determined by the initial condition $z(0)$
(in fact, $\bar x (t) =\bar y(t) =0$ at any $t>0$),
but it does not change its latitude ($\E[z(t)] =z(0)$ at any $t \ge 0$).
Hence, the system remains localized if it is initially localized (i.e., \( z(0) = \pm 1 \)). This is precisely the phenomenology the model aims to illustrate.

\RV{Henceforth, it is convenient to adopt the following notation: given a random variable $X(t)\in \mathbb{R}^3$, we define the vector of mean values
\begin{equation}
m(t)=\E[X(t)]\in \R^{3},
\label{mean0}
\end{equation}
and the second-moment matrix
\begin{equation}
P(t)=\E[X(t)X(t)^T]\in \R^{3\times 3}.
\label{covar0}
\end{equation}
}
\subsection{\RV{The second-moment} matrix}
\label{sec:secmom}

\RV{A direct application of \^{I}to calculus (see Eq. \eqref{eq: covariance matrix} in Appendix~\ref{sec: GBMs} for details) shows that the evolution of the second-moment matrix $P$ is governed by the following system of ODEs:}

\begin{equation}
\begin{aligned}
 \dot{p}_{xx}&=-4\beta^2 (p_{xx}-p_{yy}),\\
 \dot{p}_{xy}&=-8 \beta^2 p_{xy}-2\alpha p_{xz},\\
 \dot{p}_{xz}&=-2\beta^2 p_{xz}+2\alpha p_{xy},\\
 \dot{p}_{yy}&=-4\beta^2 (p_{yy}-p_{xx})-4\alpha p_{yz},\\
 \dot{p}_{yz}&=-2\beta^2 p_{yz}-2\alpha p_{zz}+2\alpha p_{yy},\\
 \dot{p}_{zz}&=4\alpha p_{yz}.
\end{aligned}
\label{coveq}
\end{equation}
The above system decouples into two smaller systems
\begin{equation}
\label{eq: 2ODEs}
 \frac{d}{dt}\begin{pmatrix}
     p_{xy}\\
     p_{xz}
 \end{pmatrix}=\begin{pmatrix}
     -8\beta^2&-2\alpha\\
     2\alpha&-2\beta^2
 \end{pmatrix}   \begin{pmatrix}
     p_{xy}\\
     p_{xz}
 \end{pmatrix},
\end{equation}
and
\begin{equation}
\label{eq: 4ODEs}
 \frac{d}{dt}\begin{pmatrix}
     p_{xx}\\
     p_{yy}\\
     p_{yz}\\
     p_{zz}
 \end{pmatrix}=M\begin{pmatrix}
     p_{xx}\\
     p_{yy}\\
     p_{yz}\\
     p_{zz}
 \end{pmatrix},\quad M=\begin{pmatrix}
  -4\beta^2&4\beta^2&0&0\\
  4\beta^2&-4\beta^2&-4\alpha &0\\
  0&2\alpha&-2\beta^2&-2\alpha\\
  0&0&4\alpha&0
 \end{pmatrix},
 \end{equation}
where this second system is the one of physical interest since it contains the 
information concerning $p_{zz}(t) =\E[z^2(t)]$ which together with $\RV{m_z(t)=}\E[z(t)]$ (see \eqref{solaver2}) contains the information concerning
the localization of the system. The characteristic polynomial and conditions for the eigenvalues of $M$ to be all negative are
detailed in Appendix \ref{app:app2}.



\subsection{Stationary \RV{second-moment} matrix}
A stationary solution $\mathbf{p}_
\infty=(p_{xx}^\infty,p_{xy}^\infty,p_{xz}^\infty,p_{yy}^\infty,p_{yz}^\infty,p_{zz}^\infty)$ solves $M \mathbf{p}_\infty=0$, that is the following linear system
\begin{equation*}
\begin{cases}
 -4\beta^2 (p^\infty_{xx}-p^\infty_{yy})=0,\\
 -8 \beta^2 p^\infty_{xy}-2\alpha p^\infty_{xz}=0,\\
 -2\beta^2 p^\infty_{xz}+2\alpha p^\infty_{xy}=0,\\
 -4\beta^2 (p^\infty_{yy}-p^\infty_{xx})-4\alpha p^\infty_{yz}=0,\\
 -2\beta^2 p^\infty_{yz}-2\alpha p^\infty_{zz}+2\alpha p^\infty_{yy}=0,\\
 4\alpha p^\infty_{yz}=0,
\end{cases}
\end{equation*}
to be considered together with the condition, which follows from \eqref{eq: SDE3} and the assumption of unitary normalization (see the last paragraph in Section \ref{sec: two state system}):
\[
p^\infty_{xx}+p^\infty_{yy}+p^\infty_{zz}=1.
\]
It is a trivial task to verify that the solutions are 
\[
p^\infty_{xx}=p^\infty_{yy}=p^\infty_{zz}=\frac{1}{3},\quad p^\infty_{xy}=p^\infty_{xz}=p^\infty_{yz}=0.
\]
Thus, \RV{when $M$ has no eigenvalues with positive real part (in particular, see Appendix \ref{app:app2} for a condition under which all non-zero eigenvalues of $M$ are real and negative)} as $t\rightarrow +\infty$:
\[
\lim_{t\rightarrow+\infty}p_{xx}=\lim_{t\rightarrow+\infty}p_{yy}=\lim_{t\rightarrow+\infty}p_{zz}=1/3, \quad \lim_{t\rightarrow+\infty} p_{xy}=\lim_{t\rightarrow+\infty} p_{xz}=\lim_{t\rightarrow+\infty} p_{yz}=0.
\]

In the following sections, we analyze model reduction techniques for the geometric Brownian process described by Eq.~\eqref{meq}. Motivated by the physical significance of the variable \( z \), which captures localization information of the process, we develop a reduced description by preserving \( z \) as a resolved variable in the large noise regime. This reduction \RV{scheme can be implemented either through the adiabatic elimination or via the IM approach, leveraging the general framework outlined in Appendix~\ref{sec: GBMs}.
We first present the method of adiabatic elimination, which is rooted in the analysis of systems of higher-order ODEs.}

\section{Model reduction via higher-order ODEs}
\label{sec: higher order ODEs}

In this section, we first derive exact, closed-form equations for both the deterministic \red{part of} the dynamics and the \RV{evolution of the second moments}, which take the form of higher-order linear ODEs. Starting from these exact equations, we then construct corresponding reduced-order models by systematically neglecting higher-order terms. Furthermore, we demonstrate that these simplified models provide accurate approximations of the full dynamics within appropriate parameter regimes.
\RV{We begin by stating the following result, whose proof is deferred to Appendix~\ref{sec:HamCal}}.

\begin{restatable}{proposition}{HamCal}
\label{eq: general derivation}
Suppose that $F$ is an $n\times n$ matrix and $u=u(t)\in\R^n$ solves the following system of linear differential equations
\begin{equation}
\label{eq: eqnu}
\dot{u}=F u,\quad u(0)=u_0.
\end{equation}
Let $p_n(\lambda):=\sum_{i=0}^n a_i \lambda ^i$ be the characteristic polynomial of $F$. Then $u(t)$ satisfies the $n$-th order differential equations: 
 \begin{equation}
 \label{eq: high derivative eqn}
  \sum_{i=0}^n a_i u^{(i)}(t)=0,
 \end{equation} 
\RV{where $u^{(i)}(t)$, for $i=1,\ldots, n$, denotes the $i$-th derivative of $u(t)$.}
\end{restatable}

Suppose we now wish to derive a set of first-order ODEs for a subset of variables $\bar{u} \in \mathbb{R}^k$ that approximate the first $k$ components ($1 \le k \le n$) of $u$. \RV{The method of \textit{adiabatic elimination}~\cite{Risken} consists in neglecting the higher-order derivatives in Eq.~\eqref{eq: high derivative eqn}}, which leads to the reduced dynamics:
\begin{equation}
    a_1 \bar{u}'(t) + a_0 = 0.
    \label{eq:reduced_dynamics}
\end{equation}
\RV{This approach is employed in the subsequent sections to reduce the systems of ODEs governing the mean value and second-moment dynamics for the two-state quantum system introduced in Section~\ref{sec: two-state system}. Notably, such procedure yields a faithful approximation of the original evolution of the second moments in the large $\beta$ regime (as shown in Fig.~\ref{fig:fig1} below)}.
In what follows, to keep the notation light, and since each component $u_i$ satisfies the same differential equation sharing the structure of  Eq.~\eqref{eq:reduced_dynamics}, we will omit the index and refer to the generic component simply as $u$.


\subsection{\RV{Exact closed equations for the averages}}

The deterministic \red{component} of \RV{\eqref{meq}} obeys the set of ODEs given in \eqref{aver}, \RV{which we conveniently rewrite here in the form:}
\begin{subequations}
\begin{align}
\dot m_x &=-2\beta^2 m_x\;, \;\label{xdet} \\    
\dot m_y &=-2\beta^2 m_y-2\alpha m_z\;, \label{ydet}\\
\dot m_z &=2\alpha m_y\;. \label{zdet}
\end{align} 
\end{subequations}
The dynamics 
of $m_x$ is already governed by a closed equation. We will derive exact closed equations for $m_y$ and $m_z$ using Proposition \ref{eq: general derivation}. We have
\begin{equation}
\frac{d}{dt}\begin{pmatrix}
    m_y\\m_z
\end{pmatrix}=\begin{pmatrix}
    -2\beta^2 &-2\alpha\\
    2\alpha &0
\end{pmatrix}\begin{pmatrix}
    m_y\\m_z
\end{pmatrix}=:Q\begin{pmatrix}
    m_y\\m_z
\end{pmatrix}.
\label{subsysdet}
\end{equation}
The characteristic polynomial of $Q$ is given by
\[
|Q-\lambda I|=\det\begin{pmatrix}
    -2\beta^2 -\lambda& -2\alpha\\
    2\alpha&-\lambda
\end{pmatrix}=\lambda^2+2\beta^2\lambda +4\alpha^2.
\]
According to Proposition \ref{eq: general derivation}, both $m_y$ and $m_z$ satisfy the same higher-order differential equations
\begin{equation}
\label{eq: exact deterministic y and z}
  \ddot{u}+2\beta^2 \dot{u}+4\alpha^2 u=0,
\end{equation}
which can also be straightforwardly derived directly from (\ref{subsysdet}).

\subsubsection{\RV{Reduced equation for the average: large $\beta$ regime}}

In the regime $\beta^2 \gg 2|\alpha|$, \RV{we use the method of adiabatic elimination to omit} the second-order terms in the exact second-order equation \eqref{eq: exact deterministic y and z} for $y$ and $z$ and obtain the following reduced dynamics for $x,y,z$
\begin{subequations}
\begin{align}
  \dot{\bar x}&=-2\beta^2 \bar x,\label{rxdet}\\
  \dot{\bar y}&=-\frac{2\alpha^2}{\beta^2}\bar y,\label{rydet}\\
  \dot{\bar z}&=-\frac{2\alpha^2}{\beta^2}\bar z.\label{rzdet}
\end{align}
\end{subequations}
The corresponding solution for $\bar{z}$ is thus given by
\[
\bar{z}(t)=\bar{z}(0)e^{-\frac{2\alpha^2}{\beta^2}t},
\]
\RV{which recovers Eq. \eqref{eqzav}.}
\subsection{Exact closed equations for the \RV{second moments}}
We recall from Sec.~\ref{sec:secmom} that the \RV{second moments} $p_{xx}, p_{yy}, p_{yz}$ and $p_{zz}$ satisfy the following system of linear ODEs
\[
\frac{d}{dt}\begin{pmatrix}
    p_{xx}\\
    p_{yy}\\
    p_{yz}\\
    p_{zz}
\end{pmatrix}=M\begin{pmatrix}
    p_{xx}\\
    p_{yy}\\
    p_{yz}\\
    p_{zz}
\end{pmatrix},
\]
where $M$ is given in \eqref{eq: 4ODEs}. The characteristic polynomial of $M$ reads:
\[
|M-\lambda I|=\lambda^4 + 10\beta^2\lambda^3 + (16\beta^4 + 16\alpha^2)\lambda^2 + 96\alpha^2\beta^2\lambda.
\]
see Appendix \ref{app:app2}.
Applying Proposition \ref{eq: general derivation}, we deduce that all the \RV{second moments} satisfy the same $4$-th order differential equation :
\begin{equation}\label{eq:pzz-fourth}
    \ddddot{u}+10\beta^2 \dddot{u}+16(\alpha^2+\beta^4)\ddot{u}+96\alpha^2\beta^2 \dot{u}=0,
\end{equation}
or equivalently
\begin{equation}
\label{eq: pzz C}    
\dddot{u}+10\beta^2 \ddot{u}+16(\alpha^2+\beta^4)\dot{u}+96\alpha^2\beta^2 u=C,
\end{equation}
where $C$ is a constant depending on $u$. It can be obtained by letting $t$ in the \RV{left-hand side} to infinity, yielding $C=96\alpha^2\beta^2 u_\infty$. Since $p_{xx}^\infty=p_{yy}^\infty=p_{zz}^\infty=1/3$, we obtain that $p_{xx}, p_{yy}$ and $p_{zz}$ all satisfy the third-order differential equation
\begin{equation}
\label{eq: exact equation for variances}   
\dddot{u}+10\beta^2 \ddot{u}+16(\alpha^2+\beta^4)\dot{u}+96\alpha^2\beta^2 (u-1/3)=0. 
\end{equation}
In Appendix~\ref{sec: exact closed explicit} we provide a direct approach to derive the higher-order differential equation for $p_{zz}$ which is the quantity of interest in the model reduction scheme pursued in this paper.

\subsubsection{Reduced equations for the \RV{second moment}: large $\beta$ regime}

We now turn to the large $\beta$ regime  where $\alpha$ is fixed and $\beta^2 \gg 2 |\alpha|$. In this regime, the term involving the third-order derivative in the exact equation \eqref{eq: exact equation for variances} becomes negligible. \RV{Applying the adiabatic elimination, we neglect} this term and get the following reduced, second-order linear ODEs governing the evolution of all the \RV{second moments}:
\begin{equation}
\label{eq: high temperature regime second order}
5\beta^2 \ddot{u}+8(\alpha^2+\beta^4)\dot{u}+48\alpha^2\beta^2 (u-1/3)=0.
\end{equation}
We can simplify further: first by ignoring the term $8\alpha^2 \dot{u}$ (since $|\alpha| \ll \beta^2$) and dividing by $\beta^2$, we get
\[
5\ddot{u}+8\beta^2\dot{u}+48 \alpha^2(u-1/3)=0.
\]
Then we remove the second-order term, to obtain a first order differential equation:
\begin{equation}
\label{eq: high temperature regime first order1}    
\beta^2\dot{u}+6 \alpha^2(u-1/3)=0.
\end{equation}
Note that we have kept the last term to preserve the equilibrium value. We take \eqref{eq: high temperature regime first order1} as the reduced equation for $p_{zz}$. Thus, denoting by $\bar{p}_{zz}$ the adiabatic approximation of the second moment, we get
\begin{equation}
\label{eq: high temperature regime first order}    
\dot{\bar{p}}_{zz}(t)=-\frac{6\alpha^2}{\beta^2}\Big(\bar{p}_{zz}(t)-1/3\Big).
\end{equation}
The explicit solution to \eqref{eq: high temperature regime first order} is given by
\begin{equation}
\bar{p}_{zz}(t)=\frac{1}{3}+\left(u(0)-\frac{1}{3}\right)e^{-\frac{6\alpha^2}{\beta^2} t}.  
\label{eq:red-sol}
\end{equation}


The graphs of the solution to Eq. \eqref{eq: exact equation for variances}, supplied with the initial data \( p_{zz}(0) =  p_{zz}'(0) = p_{zz}''(0) = 0  \), and of Eq. \eqref{eq:red-sol} are shown in Fig. \ref{fig:fig1}, for different values of $\beta$, in the regime where $\beta^2 > 2 |\alpha|$. In particular, the graph of the solution to Eq. \eqref{eq: exact equation for variances} coincides with the corresponding solution to the original set of ODEs \eqref{eq: 4ODEs}. 
The left panel also evidences that, if $\beta$ is not large enough, the adiabatic elimination result can deviate significantly from the solution of the original ODE system. 
The two panels shown in Fig. \ref{fig:fig1} should be compared to the corresponding panels of Fig. \ref{fig:red3b} in Sec. \ref{sec:redvar}.

\begin{figure}[H]
    \centering     
    \includegraphics[width=0.45\linewidth]{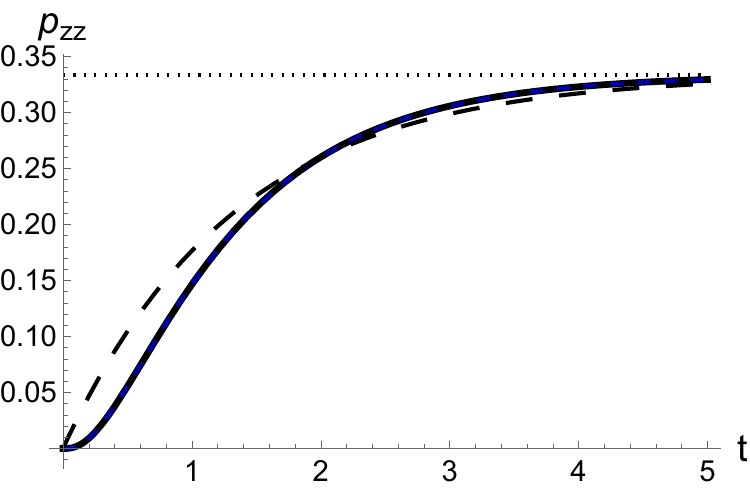}
    \includegraphics[width=0.45\linewidth]{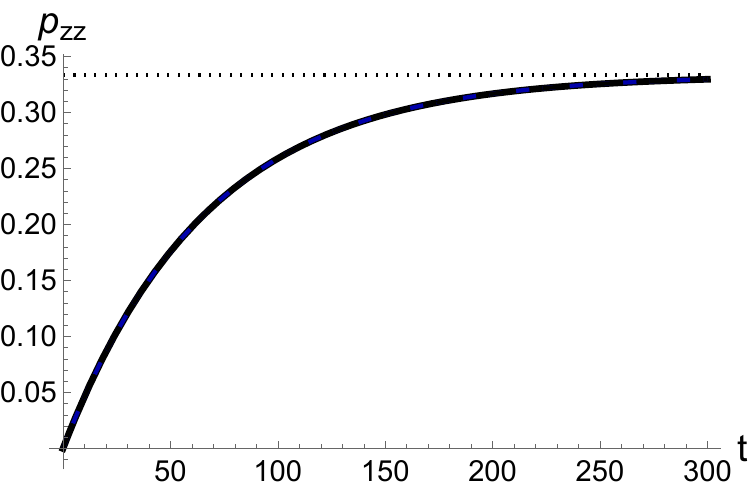}
    \caption{Time evolution of \( p_{zz}(t) \), computed from the the third-order ODE \eqref{eq: exact equation for variances} (black solid line) and from the adiabatic elimination \eqref{eq: high temperature regime first order} (black dashed line), with initial conditions \( p_{zz}(0) = 0, p_{zz}'(0) = 0, p_{zz}''(0) = 0  \). The dotted horizontal line marks the asymptotic value $1/3$. The parameters are set to the values \( \alpha = 0.5 \) and
 \( \beta^2 = 2 \) (left panel) and \( \beta^2 = 100 \) (right panel).}
    \label{fig:fig1}
\end{figure}

In Section~\ref{sec: IM method}, we contrast this strategy with an alternative reduction technique based on the IM method, which is largely employed in the analysis of dynamical systems characterized by multiple time scales.

\section{Model reduction via the IM method}
\label{sec: IM method}
The basic idea underlying the IM method is that, after a brief transient, the phase space dynamics of the system becomes confined to a lower-dimensional surface embedded within the full phase space~\red{\cite{fenichel1979geometric, GorKar05}}. This surface is invariant under the flow, i.e. any trajectory that starts on it remains confined there for all future times. In many applications, this invariant surface corresponds to the so-called slow manifold, which governs the long-term evolution of the system once the fast modes have relaxed.
In our setup, the implementation of the method begins with a rescaling of the time variable \( t \) and the parameter \( \beta \), as discussed in the following section.

\subsection{Rescaled systems}

Following the approach developed in \cite{colangeli2025hybrid}, we rescale $t\mapsto \epsilon t$ and $\beta\mapsto \beta/\sqrt{\epsilon}$. Under this time rescaling, the Wiener process transforms as $dW_{\epsilon t} = \sqrt{\epsilon}dW_t$, leading to the system:
\begin{subequations}
\label{eq: mainSDE}
\begin{align}
dx&=-2\beta^2 x\,dt-2\beta y \, dW_t\, ,\label{x}\\ 
dy&=-2\beta^2 y\,dt-2 \epsilon\alpha z \,dt+2\beta x\, dW_t\, ,\label{y}\\
dz&=2 \epsilon\alpha y\, dt \, ,\label{z}
\end{align} 
\end{subequations}
which describes the original dynamics observed on the \textit{fast} time scale.

    \subsection{The deterministic \red{part} of the rescaled system}


The deterministic \red{component} of the original dynamics for the variables $y$ and $z$ takes then the form

\begin{equation}
\frac{d}{dt}\begin{pmatrix}
    m_y\\m_z
\end{pmatrix}=\begin{pmatrix}
    -2\beta^2 &-2\epsilon\alpha\\
    2\epsilon\alpha &0
\end{pmatrix}\begin{pmatrix}
    m_y\\m_z
\end{pmatrix}=:Q_{\epsilon}\begin{pmatrix}
    m_y\\m_z
\end{pmatrix}.
\label{eq: deterministicODE}
\end{equation}

\subsection{The \RV{second-moment} equations of the rescaled system}
Exploiting the same scaling $\beta \mapsto\beta/\sqrt{\epsilon}$
and $t\mapsto \epsilon t$ with the system 
\eqref{eq: 4ODEs}, we obtain:


\begin{equation}
\frac{d}{dt}\begin{pmatrix}
    p_{xx}\\
    p_{yy}\\
    p_{yz}\\
    p_{zz}
\end{pmatrix}=M_{\epsilon}\begin{pmatrix}
    p_{xx}\\
    p_{yy}\\
    p_{yz}\\
    p_{zz}
\end{pmatrix} \; ,
\label{eqcov}
\end{equation}
with
\begin{equation}
 M_{\epsilon}=\begin{pmatrix}
  -4\beta^2&4\beta^2&0&0\\
  4\beta^2&-4\beta^2&-4\epsilon\alpha &0\\
  0&2\epsilon\alpha&-2\beta^2&-2\epsilon\alpha\\
  0&0&4\epsilon\alpha&0
 \end{pmatrix}  \; . 
 \label{Meps}
\end{equation}

\subsection{Reduction of the deterministic \red{component}}
\label{sec:IMdet}


We introduce a closure relation of the form $m_y(t) = a(\alpha,\beta,\epsilon)m_z(t)$. \red{Substituting this relation into \eqref{eq: deterministicODE}, we can compute $\dot{m}_y$ in two ways
\begin{align*}
    \dot{m}_y= a\dot{m}_z= 2\epsilon \alpha a m_y\quad\text{and}\quad \dot{m}_y=-2\beta^2 m_y-2\epsilon\alpha m_z=\Big(-2\beta^2 -2\frac{\epsilon\alpha}{a}\Big)m_y,
\end{align*}
from which we deduce the following invariance equation for the function $a(\alpha,\beta,\epsilon)$:}
\begin{equation}
    \epsilon\alpha a^2 + \beta^2 a + \epsilon\alpha = 0 \, .
    \label{IE1}
\end{equation}

The solutions to the quadratic equation \eqref{IE1} are given by:
\begin{equation}
    a_{\pm}(\varepsilon) = \frac{-\beta^2 \pm \sqrt{\beta^4 - 4\epsilon^2\alpha^2}}{2\epsilon\alpha} \, ,
    \label{IE1sol}
\end{equation}
\RV{where the dependence of $a_{\pm}$ on $\alpha$ and $\beta$ has been suppressed for brevity}. These roots remain real-valued provided that $\epsilon \leq \epsilon_c' = \beta^2 / (2|\alpha|)$. Notably, the roots $a_{\pm}$ reconstruct the two eigenvalues of the matrix $Q_{\epsilon}$ in \eqref{eq: deterministicODE}, as discussed in Appendix \ref{app:app4}. Specifically, the root $a_{+}$ corresponds to the eigenvalue of $Q_{\epsilon}$ with the smaller magnitude, which vanishes in the limit $\epsilon \to 0$. Consequently, the reduced description for $m_z(t)$ takes the form:
\begin{equation}
    \dot{m}_z(t)=\RV{2\varepsilon \alpha a_{+}(\varepsilon)m_z(t) }=-(\beta^2-\sqrt{\beta^4-4 \varepsilon^2 \alpha^2})m_z(t) 
    \label{redex2}     \; .
\end{equation}
Enforcing $\epsilon \ll \epsilon_c'$ imposes a large time-scale separation between fast and slow variables, which corresponds to the regime where adiabatic elimination is applicable. Indeed, expanding \RV{$2\varepsilon \alpha a_{+}(\varepsilon)$} in powers of $\epsilon$ yields
\begin{equation}
    \RV{2\varepsilon \alpha a_{+}(\varepsilon)} = -\frac{2\alpha^2 \epsilon^2}{\beta^2} + o(\epsilon^2) \, ,
    \label{match}
\end{equation}
\RV{which, upon further rescaling $t\mapsto t/\epsilon$ and $\beta\mapsto \sqrt{\epsilon}\beta$, allows \eqref{redex2} to recover \eqref{rzdet}}. 
This result underscores a key feature of the IM method: it not only subsumes adiabatic elimination as a special case, but also extends the validity of the reduced description across the entire interval $\epsilon \in (0, \epsilon_c']$. Notably, the reduced description breaks down at $\epsilon=\epsilon_c'$, where the two eigenvalues of $Q_{\epsilon}$ merge and, for $\epsilon > \epsilon_c'$, become complex-valued. This transition corresponds to the loss of the invariant manifold structure and marks the limit of applicability for the reduction procedure.

\subsection{Reduction of the variance $p_{zz}(t)$}
\label{sec:redvar}

To comply with the structure of the ODE system \eqref{eqcov}, we introduce the following (affine) closures:
\begin{equation*}
    p_{xx}(t)=a_1 \left(p_{zz}(t)-\frac{1}{3}\right)+\frac{1}{3}\; , \; p_{yy}(t)=a_2 \left(p_{zz}(t)-\frac{1}{3}\right)+\frac{1}{3}\; , \; p_{yz}(t)=a_3 \left(p_{zz}(t)-\frac{1}{3}\right)\; ,
\end{equation*}
with yet unknown coefficients $a_1,a_2,a_3\in \mathbb{R}$. This closure ansatz ensures the correct equilibrium values $p_{xx}^\infty=p_{yy}^\infty=p_{zz}^\infty=1/3$.

The invariance equations for the coefficients \( a_1, a_2, a_3 \) amount to the following system of algebraic equations:
\begin{subequations}
   \label{IE2} 
   \begin{align}
-4\beta^2 a_1 + 4\beta^2 a_2 - 4\epsilon\alpha a_1 a_3 &= 0 \; , \\ 
4\beta^2 a_1 - 4\beta^2 a_2 - 4\epsilon\alpha a_3 - 4\epsilon\alpha a_2 a_3 &= 0 \; , \\  
2\epsilon\alpha a_2 - 2\beta^2 a_3 - 2\epsilon\alpha - 4\epsilon\alpha a_3^2 &= 0 \; .
  \end{align}
\end{subequations}

We examine the eigenvalues of the matrix \(M_{\epsilon} \), defined in Eq.~\eqref{Meps}, which are plotted in Fig.~\ref{fig:eigen} as functions of \( \epsilon \). We verify numerically that the spectrum of \(M_{\epsilon} \) undergoes a qualitative change at a critical value $\epsilon_c''$, which signals the breakdown of the reduced description: for \( \epsilon < \epsilon_c'' \), all eigenvalues remain real and distinct, whereas for \( \epsilon > \epsilon_c'' \) a pair of conjugate complex eigenvalues appears. 
Among the solution branches $\{a_1(\epsilon), a_2(\epsilon), a_3(\epsilon)\}$ satisfying the invariance equations \eqref{IE2}, we select those that are both continuous in $\epsilon$ and fulfill the  asymptotic condition:
\begin{equation}  
\lim_{\epsilon \to 0} \epsilon a_3(\epsilon) = 0 \; ,
\label{asympt}
\end{equation}
which guarantees that the coefficient $4\alpha\epsilon a_3^*(\epsilon)$ appearing in the reduced dynamics for $p_{zz}(t)$ maintains the proper asymptotic behavior in the small $\epsilon$ limit.
Numerically, we identify two such solutions. The first is the trivial solution \( (a_1, a_2, a_3) = (1,1,0) \), which is disregarded as it is independent of \( \epsilon \). The second, denoted \( (a_1^*(\epsilon), a_2^*(\epsilon), a_3^*(\epsilon)) \), is the physically relevant one.

\begin{figure}[H]
    \centering   
    \includegraphics[width=0.45\linewidth]{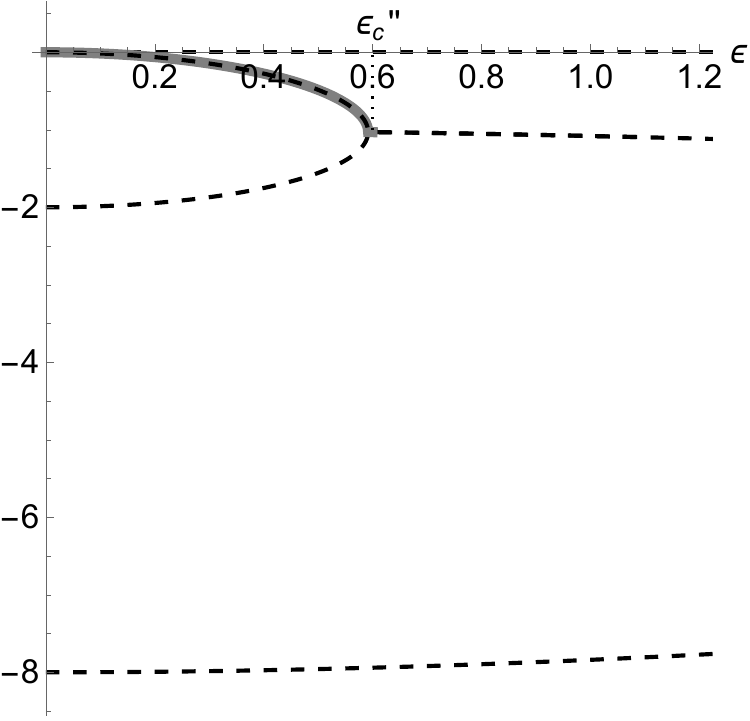}    
    \caption{
Behavior of the eigenvalues of the matrix \(M_{\epsilon} \) in Eq.~\eqref{Meps} (black dashed lines) and \( 4\alpha \epsilon a_3^*(\epsilon) \) (solid gray line) as functions of $\epsilon$, using $\alpha=0.5$ and $\beta=1$. The critical point \( \epsilon_c'' \simeq 0.59185 \) marks the onset of a pair of complex conjugate eigenvalues. 
}
    \label{fig:eigen}
\end{figure}

Restoring the original (slow) time scale through the transformation $t \mapsto t/\epsilon$, the reduced dynamics for \( p_{zz}(t) \) takes the form:
\begin{equation}
    \dot{\bar{p}}_{zz}(t) = 4\alpha a_3^*(\epsilon)\left(\bar{p}_{zz}(t) - \frac{1}{3}\right) \; ,
    \label{exred}
\end{equation}
which should be compared with the result obtained from the adiabatic elimination expressed by Eq.~\eqref{eq: high temperature regime first order}, rewritten here under the rescaling \( \beta \mapsto\beta/\sqrt{\epsilon} \):
\begin{equation}
    \dot{\bar{p}}_{zz}(t) = -\frac{6\alpha^2 \epsilon}{\beta^2}\left(\bar{p}_{zz}(t) - \frac{1}{3}\right) \; .
    \label{adiab}
\end{equation}
Remarkably, a Chapman--Enskog method of solution can be systematically applied to the invariance equations \eqref{IE2} through an expansion of the coefficients $a_1,a_2,a_3$ in powers of $\epsilon$. The leading-order analysis yields the relation:
\begin{equation}
   4\alpha a_3^*(\epsilon) = -\frac{6\alpha^2 \epsilon}{\beta^2} + o(\epsilon) \; ,
\end{equation}
which demonstrates that the IM approach correctly reproduces the adiabatic result in the small-$\epsilon$ regime while remaining valid for finite values of $\epsilon$ as well. The full behavior across the large-$\beta$ regime is illustrated in Fig.~\ref{fig:red3b}.

\begin{figure}[H]
    \centering     
    \includegraphics[width=0.45\linewidth]{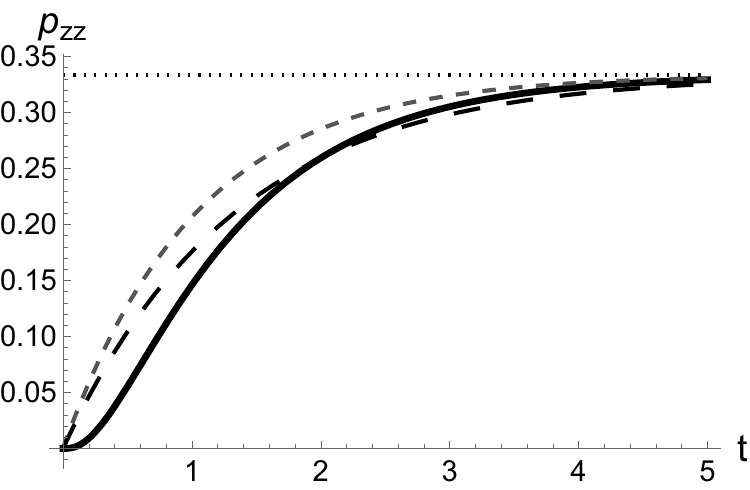}
    \includegraphics[width=0.45\linewidth]{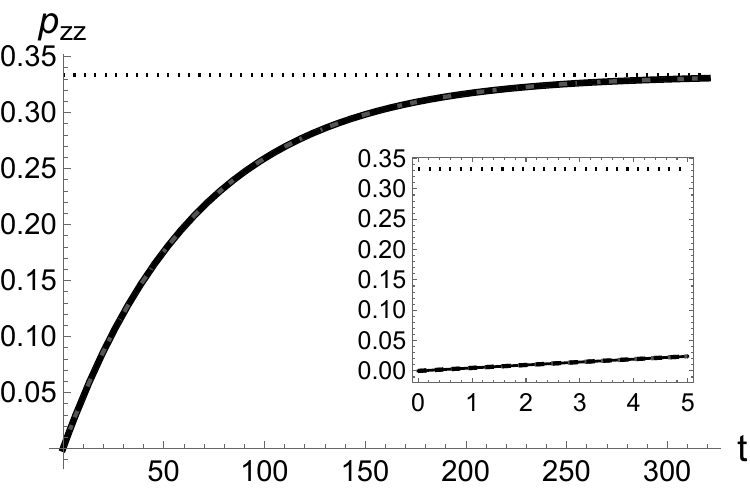}
    \caption{Time evolution of \( p_{zz}(t) \) computed from the original system \eqref{eqcov} (black solid line), and time evolution of the reduced dynamics $\tilde{p}_{zz}(t)$ computed from the invariant manifold reduction \eqref{exred} (gray short-dashed line) and from the adiabatic elimination \eqref{adiab} (black dashed line), with initial condition \( p_{zz}(0) = 0 \). Results are shown for \( \epsilon = 0.5 < \epsilon_c'' \) (left panel) and \( \epsilon = 0.01 \) (right panel). The inset in the right panel offers a magnified view of the early-time dynamics. The dotted horizontal line marks the asymptotic value \( 1/3 \). In both panels, parameters are set to \( \alpha = 0.5 \) and \( \beta = 1 \).}
    \label{fig:red3b}
\end{figure}

As a final remark, we observe that the onset of critical values $\epsilon_c'$ and $\epsilon_c''$ imply that the reduction scheme based on the IM method works for $\epsilon\le \min\{\epsilon_c',\epsilon_c''\}$.

\section{Full SDE reduced model of $z$}
\label{sec:reddyn}
In this section, we utilize the general reduction scheme described in \RV{Appendix}~\ref{sec: GBMs} to derive a reduced stochastic differential equation for the variable $z$ from the full system of SDEs in Eq. \eqref{meq}. 
\subsection{A general model reduction of GBMs for 1D reduced dynamics}
\label{sec: reduction scheme}
Let \( n \geq 1 \) be an integer, and let \( A, B, D \in \mathbb{R}^{n \times n} \) be given matrices. We consider \( n \)-dimensional GBMs\footnote{More precisely, Eq.~\eqref{generalSDE} combines features of both an Ornstein–Uhlenbeck process and a geometric Brownian motion.} of the form:
\begin{equation}
    dX\RV{(t)}= AX\RV{(t)}\,dt+BX\RV{(t)}\,dW\RV{(t)}+D\,dU\RV{(t)},
    \label{generalSDE}
\end{equation}
with $W{(t)}$ and $U({(t)}$ denoting independent standard Wiener processes of dimension $1$ and $n$, respectively. The above SDEs are equipped with the initial data $X(0)=X_0$. \RV{The dynamics of the mean and second moments of \eqref{generalSDE} are reviewed in Appendix \ref{sec: means-2ndmoments}.}

Our aim is to construct a reduced dynamics for a subset of resolved variables. \RV{In this section, we focus on the case of a one-dimensional (1D) resolved variable, consistent with the application to the two-state quantum system discussed in Section~\ref{sec: two-state system}. A potential strategy for extending this approach to the multidimensional case is discussed in Appendix~\ref{sec: GBMs}.}

\RV{We denote the resolved variable by $X^{\rm red}\in \R$, which is assumed to obey a one-dimensional} mixed Ornstein–Uhlenbeck (OU) and geometric Brownian process:
\begin{equation}
  \label{eq: general reduced SDE}
  d X^{\rm red}\RV{(t)}= \bar A X^{\rm red}\RV{(t)}\, dt+\bar B X^{\rm red}\RV{(t)} dw(t)+\bar D du\RV{(t)},
\end{equation}
where $\bar A, \bar B, \bar D\in \RV{\R}$ are the reduced drift and diffusion coefficients, $w\RV{(t)}$ and $u\RV{(t)}$ are independent \RV{one-dimensional} Wiener processes. The advantage of a mixed OU and GBM is that it is fully determined by the dynamics of the mean and the second moments; from these, one can identify the corresponding drift and diffusion matrices. Our task is to determine the reduced drift and diffusion coefficients $\bar A, \bar B, \bar D\in \RV{\R}$.  
\RV{The goal of our model reduction approach is to construct a reduced GBM that captures the exact mean (deterministic
part of the dynamics) and the exact stationary second moment, while providing the closest evolution of the second-moment matrix. Guided by these principles,} we propose the following procedure for model reduction:
\begin{itemize}
    \item derive the reduced model for the deterministic part of \eqref{generalSDE} using either the adiabatic or IM approach:
    \[
    \dot{\bar{X}}=\bar{A} \bar{X}+\bar{a}.
    \]
    where $\bar{A}\in \RV{\R, \bar a\in \R}$.
    \item derive the reduced equation for the \RV{second moment}, again using either the adiabatic elimination or the IM approach:
    \[
    \dot{\bar{p}}= \bar{H} \bar{p}+\bar{h}.
    \]
    where the drift \RV{coefficient $\bar{H}\in\R, \bar{h}\in \R$} 
\item The diffusion \RV{coefficients $\bar{B}, \bar{D} \in \R$} satisfy the 
\RV{one-dimensional} Lyapunov equation \RV{(see Eq. \eqref{eq: stationary covariance matrix} in Appendix \ref{sec: means-2ndmoments} for the general case)}
\begin{equation}
 (2\bar A +\bar B^2) \bar{p}_\infty+\bar D^2=0 \; ,
\label{Lyapgen}
\end{equation}  
 where \RV{$\bar p_\infty\in \R$} is the \RV{part} of the original invariant matrix $P_\infty$ corresponding to the resolved variables. Equation \eqref{Lyapgen} establishes a connection between the diffusion \RV{coefficients} $\bar B, \bar D$, the drift \RV{coefficient} $\bar A$ and the stationary \RV{second-moment} $\bar p_\infty$. We emphasize that this condition generalizes the classical \textit{fluctuation-dissipation} relation for OU processes \cite{Kubo66,Zwanzig,Pavl,CDM22} to the broader case of GBMs with \RV{multiplicative} noise, as described by Eq.~\eqref{generalSDE}. This relation ensures that the reduced dynamics has the same marginal invariant measure as the original one. Since $\bar A$ is known from the first step, it remains to find either $\bar{B}$ or $\bar D$. We find $\bar B$ and $\bar D$ such that the \RV{second-moment} dynamics of the reduced dynamics \RV{(see \eqref{eq: linear ODE} in Appendix \ref{sec: means-2ndmoments} for details)
\begin{equation*}
\label{covariance1D}
\dot{\tilde{p}}=\bar{D}^2+(2\bar{A}+\bar{B}^2)\tilde{p},  
\end{equation*}}is closest to the reduced second-moment dynamics obtained in the second step (via the adiabatic elimination or the IM method) in a suitable norm $\|\cdot\|$, for instance the $L^\infty$-norm $\|f\|_\infty=\sup_{t\geq 0}|f(t)|$ or the $L^2$-norm, $\|f\|_2=\Big(\int_0^\infty f(t)^2\,dt\Big)^{1/2}$. That is, $\bar B$ and $\bar D$ solve the following minimization problem
\begin{equation}
\label{eq: general minimization problem}
\min_{\bar B,\bar D}\|\bar p -\tilde{p}\|.
\end{equation}
\RV{This optimization problem can be solved explicitly for both $L^2-$ and $L^\infty-$ norms, see Lemmas \ref{lem: lem1}-\ref{lem: lem2} below and Appendix \ref{app: L2norm}. In Appendix \ref{sec: multipleD}, we discuss possible approaches to solve the corresponding optimization problem when the reduced dynamics is multi-dimensional.}
\end{itemize}
The distance between the \RV{second-moment} $\tilde p$ of the reduced dynamics and the original one $p\vert_{\bar X}$ \RV{(which is the subpart of the second-moment matrix of the original dynamics that corresponds to the resolved variables)} can be estimated via
\begin{equation}
\label{eq: error estimate}    
\|\tilde p-p\vert_{\bar X}\|\leq \|\tilde p-\bar{p}\|+\|\bar{p}-p\vert_{\bar X}\|.
\end{equation}
\red{Estimates of errors between the original and reduced dynamics using the $L^2$-norm has been investigated previously in the literature, see for instance \cite{redmann2021bilinear} and references therein.} The key message is that $p\vert_{\bar X}$ is expensive to compute (thus the left-hand side), but the two terms on the right-hand side are simpler \red{because $\bar{p}$ is essentially an explicit truncation of $p\vert_{\bar X}$ }. This also explains why we want to minimize the last term by solving the minimization problem \eqref{eq: general minimization problem}.

\subsection{Derivation of the reduced models}
In the previous sections, using either the adiabatic elimination method in Section \ref{sec: higher order ODEs} or the IM method in Section \ref{sec: IM method}, we have already derived the following reduced dynamics for the deterministic \red{component} and \RV{second moment} of the variable $
z$:
\begin{itemize}
    \item reduced deterministic \red{part}, 
    \begin{equation}
\label{eq: bar Az}
\dot{\bar z}=\bar{A}_z\bar z \quad\text{where}\quad 
\bar{A}_z=\begin{cases}
-2\alpha^2/\beta^2 \quad \text{in the adiabatic elimination method, cf. \eqref{rzdet}},\\
-(\beta^2-\sqrt{\beta^4-4\alpha^2})\quad \text{in the IM method, cf. \eqref{redex2}}.
\end{cases}
    \end{equation}

\item reduced \RV{second moment}, 
\[
\dot{\bar{p}}_{zz}=\tilde{A}\left( \bar{p}_{zz}-\frac{1}{3}\right)\quad\text{where}\quad \tilde{A}=\begin{cases}
   -6\alpha^2/\beta^2\quad \text{in the adiabatic elimination method, cf. \eqref{eq: high temperature regime first order}},\\
 4\alpha a_3^*\quad \text{in the IM method, cf. \eqref{exred}}.
\end{cases}    
\]
\end{itemize}
By exploiting the general reduction scheme described above, we now construct a full reduced SDE for $z$, which is of the form of a general 1-dimensional geometric Brownian motion:
\begin{equation}
\label{eq:bar z}
d z^{\rm red}\RV{(t)}= \bar A_z z^{\rm red}\RV{(t)}\,dt+\bar B_z z^{\rm red}\RV{(t)}\, dw_t+\bar D_z\, du_t,    
\end{equation}
where $\bar{A}_z$ is the drift coefficient obtained from the reduced deterministic \red{part}, either using the adiabatic elimination method or the Invariant Manifold method, \red{and $w_t$ and $u_t$ are independent one-dimensional Wiener processes.
} To find $\bar B_z$ and $\bar D_z$ we require the following two conditions
\begin{itemize}
    \item $\bar B_z$ and $\bar D_z$ satisfy the following relation:
    \begin{equation}
\label{eq: red Lyapunov cond}
(2\bar{A}_z+\bar{B}_z^2) p^\infty_{zz}+\bar{D}_z^2=0.
    \end{equation}
This relation ensures that the reduced dynamics \eqref{eq:bar z} has the correct equilibrium $\bar z_\infty=z_\infty$.
\item The \RV{second moment} computed from the reduced dynamics \eqref{eq:bar z} is closest to the reduced \RV{second moment $\tilde{p}_{zz}$} in the $L^\infty$-norm defined below. 
\end{itemize}
\RV{The second-moment equation for $\bar z$ is given by (see Eq. \eqref{eq: covariance matrix} in Appendix \ref{sec: means-2ndmoments} for the general derivation)}
\[
\dot{\tilde{p}}_{zz}=(2\bar{A}_z+\bar{B}^2_z)(\tilde{p}_{zz}-p^\infty_{zz}),\quad \tilde{p}(0)=\tilde{p}_0, \quad \text{where}\quad p^\infty_{zz}=1/3.
\]
This has an explicit solution
\[
\tilde{p}_{zz}(t)=\RV{p_{zz}^\infty} +(\tilde{p}_0-p_{zz}^\infty)e^{(2\bar{A}_z+\bar{B}^2_z)t}.
\]

We aim to find $\bar{B}_z$ such that the above dynamics is closest to the reduced dynamics obtain\red{ed} from the adiabatic limit. To this end, we compute their distance based on the supremum norm, \RV{assuming that $\bar{p}_{zz}(0)=\tilde{p}_{zz}(0)=\tilde{p}_0$,}
\[
\|\bar{p}_{zz}-\tilde{p}_{zz}\|_\infty=\sup_{t\in[0,\infty)}|\bar{p}_{zz}(t)-\tilde{p}_{zz}(t)|=|\RV{(\tilde{p}_0-p^\infty_{zz})}|\, \sup_{t\in[0,\infty)}|(e^{\tilde{A} t}-e^{(2\bar{A}_z+\bar{B}^2_z)t})|,
\]
and solve the following minimization problem for $\bar B_z$:
\begin{equation}
\label{eq: min bar B}
\min_{B_z}\|\bar{p}_{zz}-\tilde{p}_{zz}\|_\infty.
\end{equation}
We address this optimization problem in two steps via the following two auxiliary lemmas \RV{whose proofs are given in Appendix \ref{app:app6}.}
\begin{lemma}
\label{lem: lem1}
Let $a,b<0$. Then
\[
\sup_{t\geq 0}|e^{at}-e^{bt}|=\Big(\frac{a}{b}\Big)^\frac{b}{b-a}-\Big(\frac{a}{b}\Big)^\frac{a}{b-a}.
\]
\end{lemma}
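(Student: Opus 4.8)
The plan is to reduce the statement to an elementary one–variable extremum problem. Set $g(t):=|e^{at}-e^{bt}|$ for $t\ge 0$. We may assume $a\neq b$ (the right-hand side is otherwise undefined) and, without loss of generality, $a<b<0$; the opposite ordering is handled by the symmetry of $|e^{at}-e^{bt}|$ under $a\leftrightarrow b$. Under this ordering $at<bt$ for every $t>0$, so $e^{at}<e^{bt}$ there and hence $g(t)=e^{bt}-e^{at}$ is smooth on $[0,\infty)$, with $g(0)=0$ and $g(t)\to 0$ as $t\to\infty$ (both exponents are negative). Since $g$ is continuous, nonnegative, strictly positive on $(0,\infty)$, and vanishes at both ends of the half-line, it attains its supremum at an interior stationary point; this is the first thing I would record.

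Next I would locate that stationary point. Differentiating gives $g'(t)=be^{bt}-ae^{at}=e^{bt}\bigl(b-ae^{(a-b)t}\bigr)$, so $g'(t)=0$ is equivalent to $e^{(a-b)t}=b/a$. Since $a<b<0$ we have $0<b/a<1$ and $a-b<0$, so $t\mapsto e^{(a-b)t}$ is strictly decreasing and meets the level $b/a$ at exactly one point,
\[
t^\ast=\frac{\ln(b/a)}{a-b}\in(0,\infty).
\]
Because $g'(0)=b-a>0$ while $g'(t)\sim be^{bt}<0$ for large $t$, this unique critical point is the global maximizer of $g$ on $[0,\infty)$.

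Finally I would substitute. From the definition of $t^\ast$, $e^{at^\ast}=\exp\!\bigl(\tfrac{a}{a-b}\ln(b/a)\bigr)=(b/a)^{a/(a-b)}$ and likewise $e^{bt^\ast}=(b/a)^{b/(a-b)}$, so
\[
\sup_{t\ge 0}\bigl|e^{at}-e^{bt}\bigr|=g(t^\ast)=\Bigl(\tfrac{b}{a}\Bigr)^{\frac{b}{a-b}}-\Bigl(\tfrac{b}{a}\Bigr)^{\frac{a}{a-b}}.
\]
Writing $b/a=(a/b)^{-1}$ converts the exponents $\tfrac{b}{a-b},\tfrac{a}{a-b}$ into $\tfrac{b}{b-a},\tfrac{a}{b-a}$, which yields exactly the claimed expression $\bigl(\tfrac{a}{b}\bigr)^{\frac{b}{b-a}}-\bigl(\tfrac{a}{b}\bigr)^{\frac{a}{b-a}}$.

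I do not expect a genuine obstacle here; the argument is essentially routine. The only points needing a little care are (i) justifying that the supremum is actually attained, which relies on the boundary behavior of $g$ at $0$ and at $\infty$ together with its positivity; (ii) verifying that the stationary equation has a unique positive root, so that $t^\ast$ is the maximizer rather than a saddle or minimum; and (iii) the sign bookkeeping in passing between powers of $b/a$ and powers of $a/b$, since the stated closed form is not symmetric under $a\leftrightarrow b$ and returns the correct (positive) value precisely for the ordering $a<b<0$ fixed at the outset.
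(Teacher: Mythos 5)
Your proof is correct and takes essentially the same route as the paper: reduce to $a<b<0$, locate the unique interior critical point $t^{*}=\ln(a/b)/(b-a)$ of $e^{bt}-e^{at}$, and substitute back to get the closed form. Your added care about attainment and uniqueness of the stationary point only tightens the paper's argument, which is otherwise identical (and which, incidentally, misstates the boundary value as $f(0)=1$ where you correctly have $f(0)=0$).
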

Let $a<0$. We consider the following function of $b$: 
\[
F(b):=\Big(\frac{a}{b}\Big)^\frac{b}{b-a}-\Big(\frac{a}{b}\Big)^\frac{a}{b-a}.
\]
\begin{lemma}
\label{lem: lem2}
 We have
 \[
 \min_{a<b_0\leq b< 0} F(b)=F(b_0).
 \]
\end{lemma}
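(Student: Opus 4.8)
The plan is to prove the stronger statement that $F$ is \emph{strictly increasing} on the whole interval $(a,0)$. Since the hypothesis $a<b_0$ places the subinterval $[b_0,0)$ entirely inside $(a,0)$, monotonicity then gives $F(b)>F(b_0)$ for every $b\in(b_0,0)$, hence $\min_{a<b_0\le b<0}F(b)=F(b_0)$. (Note the hypothesis $a<b_0$ is essential: $F$ extends continuously to $(a,0]$ with $F(a^+)=0$, so without it the infimum would not be attained at $b_0$.)

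To get monotonicity I would reuse the variational description of $F$ already set up in the proof of Lemma~\ref{lem: lem1}. For $a<b<0$, put $f_b(t):=e^{bt}-e^{at}\ge 0$; then $F(b)=\sup_{t\ge 0}f_b(t)=f_b(t^*(b))$, where
\[
t^*(b)=\frac{1}{b-a}\ln\!\Big(\frac{a}{b}\Big)
\]
is the unique critical point of $f_b$. Since $a<b<0$ we have $a/b>1$, so $\ln(a/b)>0$, and also $b-a>0$; therefore $t^*(b)>0$, i.e. the maximiser is interior, and $t^*$ is a $C^\infty$ function of $b$ on $(a,0)$. The structural fact I would exploit is the first-order condition
\[
f_b'\big(t^*(b)\big)=b\,e^{b\,t^*(b)}-a\,e^{a\,t^*(b)}=0 .
\]

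Next I would simply differentiate $F(b)=e^{b\,t^*(b)}-e^{a\,t^*(b)}$ with respect to $b$. Writing $\phi=t^*$ and using the chain and product rules,
\[
F'(b)=\big(\phi(b)+b\,\phi'(b)\big)e^{b\phi(b)}-a\,\phi'(b)\,e^{a\phi(b)}
     =\phi(b)\,e^{b\phi(b)}+\phi'(b)\big(b\,e^{b\phi(b)}-a\,e^{a\phi(b)}\big).
\]
The bracketed term is precisely $f_b'(t^*(b))=0$, so the $\phi'(b)$-contribution drops out and
\[
F'(b)=t^*(b)\,e^{b\,t^*(b)}>0 ,
\]
because $t^*(b)>0$ and $e^{b\,t^*(b)}>0$. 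Hence $F$ is strictly increasing on $(a,0)$, and the claimed minimum identity follows.

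I do not expect a genuine obstacle here: the computation is the envelope theorem carried out by hand, and the only points requiring care are (i) recording that $t^*(b)$ is differentiable and strictly positive on $(a,0)$ (which follows from $a/b>1$), (ii) invoking the first-order condition to kill the $\phi'(b)$ term, and (iii) using $a<b_0$ to ensure $[b_0,0)\subset(a,0)$ so that the monotonicity just established actually applies. If one prefers to avoid differentiating the implicitly given $t^*(b)$ altogether, the same conclusion can be reached by a direct comparison: for $a<b_1<b_2<0$, evaluating $f_{b_2}$ at the maximiser $t^*(b_1)$ of $f_{b_1}$ gives $F(b_2)\ge f_{b_2}(t^*(b_1))>f_{b_1}(t^*(b_1))=F(b_1)$, since $e^{b_2 t}>e^{b_1 t}$ for $t>0$; this monotone-in-parameter argument is the route I would fall back on if the differentiability bookkeeping is deemed too heavy.
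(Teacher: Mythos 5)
Your proof is correct, but it takes a different route from the paper. The paper never touches the variational characterization again: it substitutes $x=a/b$ (so that $b\in[b_0,0)$ corresponds to $x\ge a/b_0>1$), rewrites the closed-form expression as $G(x)=x^{1/(1-x)}-x^{x/(1-x)}$, and shows by direct differentiation that $G'(x)=\log(x)\,x^{x/(1-x)}/(x-1)>0$ for $x>1$, concluding monotonicity and hence that the minimum sits at $x=a/b_0$, i.e.\ at $b=b_0$. You instead go back to the representation $F(b)=\sup_{t\ge 0}\bigl(e^{bt}-e^{at}\bigr)=f_b\bigl(t^*(b)\bigr)$ furnished by Lemma~\ref{lem: lem1} and obtain $F'(b)=t^*(b)\,e^{b\,t^*(b)}>0$ via the envelope argument (the $\phi'$ term is killed by the first-order condition), or, in your fallback version, prove strict monotonicity with no differentiation at all by comparing $f_{b_2}$ and $f_{b_1}$ at the interior maximiser $t^*(b_1)>0$. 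Both proofs reduce the lemma to strict monotonicity of $F$ on $(a,0)$ and use $a<b_0$ only to place $[b_0,0)$ inside that interval; what your approach buys is that it never differentiates (or even needs) the explicit closed-form expression for $F$, with the comparison variant being arguably the most elementary of the three arguments, while the paper's computation is self-contained at the level of the formula and does not rely on the variational structure from Lemma~\ref{lem: lem1}.
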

Returning back to the model reduction of the variable $z$. We apply Lemmas \ref{lem: lem1} and \ref{lem: lem2} for
\[
a=\tilde{A},\quad b=2\bar{A}_z+\bar{B}_z^2\geq 2\bar{A}_z=: b_0
\]
and deduce that the optimal solution for the minimization problem \eqref{eq: min bar B} is obtained at $b=b_0$, that is $\bar B_z=0$. Thus the reduced dynamics, $z^{\rm red}$, for $z$ is given by
\begin{equation}
d z^{\rm red}_t=\bar A_z z^{\rm red}_t\,dt+\bar D_z\, du_t \; ,
\label{redadd}
\end{equation}
where $\bar A_z$ is determined in \eqref{eq: bar Az} and $\bar D_z=\sqrt{-\frac{2}{3} \bar A_z}$. 


\RV{In appendix \ref{app: L2norm}, we consider the minimization problem \eqref{eq: min bar B} where the $L^\infty$-norm is replaced by the $L^2$-norm.
}
\begin{remark}
    We observe that the same result expressed by Eq. \eqref{redadd} can be reached through a simpler route, which we briefly outline below. Consider the reduced dynamics in Eq.~\eqref{eq:bar z}, and assume that \( \bar{B}_z = 0 \), meaning that we seek \textit{a priori} a reduced model driven solely by additive noise. In this case, the dynamics takes the form given in Eq.~\eqref{redadd}, namely
\[
dz^{\rm red}_t = \bar{A}_z z^{\rm red}_t\,dt + \bar{D}_z\,du_t \;,
\]
with coefficients \( \bar{A}_z \) and \( \bar{D}_z \) to be determined. The drift term \( \bar{A}_z \) can be computed from Eq.~\eqref{eq: bar Az}, whereas the noise amplitude \( \bar{D}_z \) is obtained, in the steady state, by solving the standard Lyapunov equation \cite{Pavl}:
\begin{equation}
2\bar{A}_z p_{zz}^\infty + \bar{D}_z^2 = 0 \;,
\label{Lyap}
\end{equation}
\RV{where $p^\infty_{zz}=1/3$ is the stationary second moment in the $z$-variable of the original dynamics. Following the procedure outlined in \cite{CM22,CDM22,colangeli2025hybrid}, we require the stationary second moment of $\bar{z}_t$ to coincide with $p^\infty_{zz}$, which leads to $\bar{D}_z = \sqrt{-\frac{2}{3} \bar{A}_z}$.}
\end{remark}


We emphasize that the reduced dynamics governed by Eq.~\eqref{redadd} preserves the  localization property of the original system dynamics characterized by Eq.~\eqref{solaver2}. This preservation demonstrates that our reduction method retains the essential physical behavior of the system while significantly simplifying its mathematical description.

\subsection{Quantification of errors}
In this section we provide a quantitative error estimates between the original dynamics $z$ and the reduced one using Wasserstein distance.

Let $\mathcal{P}_2(\R^d)$ be the set of probability measures on $\R^d$ having finite second moments. Let $\mu, \nu$ be two probability measures in $\mathcal{P}_2(\R^d)$. The Wasserstein distance, $W_2(\mu,\nu)$, between them is defined by
\begin{equation*}
 W_2(\mu,\nu)   =\Big(\min_\pi\int |x-y|^2d\pi(x,y)\Big)^{\frac{1}{2}},
\end{equation*}
where the infimum is taken over $\pi\in \Pi(\mu,\nu)$, which is the set of all probability measures that have the first and second marginals as $\mu$ and $\nu$ respectively. $W_2(\mu,\nu)$ can be formulated as
\begin{equation}
\label{eq: W2 formulation}
 W_2(\mu,\nu)^2=\min \mathbb{E}|X-Y|^2,   
\end{equation}
where the infimum is taken over random variables $X$ and $Y$ having distributions $\mu$ and $\nu$, respectively. Let $m$ and $m'$ are the means of $\mu$ and $\nu$ respectively. By using the transformation
\begin{align*}
\mathbb{E}(|X-Y|^2)&=\mathbb{E}(|(X-m)-(Y-m')+(m-m')|^2)    
\\&=|m-m'|^2+\mathbb{E}[|(X-m)-(Y-m')|^2]+2(m-m')\mathbb{E}[(X-m)-(Y-m')]
\\&=|m-m'|^2+\mathbb{E}[|(X-m)-(Y-m')|^2],
\end{align*}
we obtain the following lemma:
\begin{lemma}[\cite{hamm2023wasserstein}] 
\label{lem: lemW2}
We have
\[
W_2^2(\mu,\nu)=|m-m'|^2+\min_{X\sim \mu, Y\sim \nu} \mathbb{E}|(X-m)-(Y-m')|^2.
\]
\end{lemma}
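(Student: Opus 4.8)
The plan is to derive the identity directly from the coupling (Kantorovich) characterization of $W_2$ recorded in \eqref{eq: W2 formulation}, namely $W_2^2(\mu,\nu) = \min \mathbb{E}|X-Y|^2$ over all pairs $(X,Y)$ with $X\sim\mu$ and $Y\sim\nu$. Indeed, the algebraic expansion displayed immediately before the lemma statement already does the bulk of the work; what remains is to observe that it holds \emph{coupling by coupling} and then to pass to the minimum.

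First I would fix an arbitrary admissible coupling $(X,Y)$, with $X\sim\mu$ and $Y\sim\nu$, and write $X-Y = (X-m)-(Y-m') + (m-m')$. Expanding the square and taking expectations gives
\[
\mathbb{E}|X-Y|^2 = |m-m'|^2 + \mathbb{E}\big|(X-m)-(Y-m')\big|^2 + 2(m-m')\cdot \mathbb{E}\big[(X-m)-(Y-m')\big].
\]
Next I would note that, by the very definition $m=\mathbb{E}_\mu$ and $m'=\mathbb{E}_\nu$, one has $\mathbb{E}[X-m]=0$ and $\mathbb{E}[Y-m']=0$ for \emph{every} such coupling; hence the cross term vanishes identically and
\[
\mathbb{E}|X-Y|^2 = |m-m'|^2 + \mathbb{E}\big|(X-m)-(Y-m')\big|^2 .
\]

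Finally, since $|m-m'|^2$ is a constant independent of the coupling, taking the minimum of both sides over all $(X,Y)$ with $X\sim\mu$, $Y\sim\nu$ and pulling this constant out of the minimum yields
\[
W_2^2(\mu,\nu) = \min_{X\sim\mu,\,Y\sim\nu}\mathbb{E}|X-Y|^2 = |m-m'|^2 + \min_{X\sim\mu,\,Y\sim\nu}\mathbb{E}\big|(X-m)-(Y-m')\big|^2,
\]
which is the asserted identity. There is essentially no serious obstacle here: the only point worth a line of justification is that the cross term vanishes for \emph{all} admissible couplings — not merely for the optimal one — so that the constant term genuinely factors out of the minimization; this is immediate from the marginal constraints, and the remaining computation is exactly the one already carried out in the text.
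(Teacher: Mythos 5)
Your proposal is correct and follows essentially the same route as the paper: the displayed expansion of $\mathbb{E}|X-Y|^2$ preceding the lemma is exactly your coupling-by-coupling computation, with the cross term killed by the marginal constraints, after which the constant $|m-m'|^2$ is pulled out of the minimization in \eqref{eq: W2 formulation}. Your explicit remark that the cross term vanishes for every admissible coupling (not only the optimal one) is the right point to stress and matches the paper's argument.
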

Given a probability measure $\mu$ and a random variable $X\sim\mu$. The covariance matrix $\Sigma$ is given by
\[
\Sigma=\mathbb{E}[(X-\mathbb{E}X)(X-\mathbb{E}X)^T]=\int (x-\bar{x})(x-\bar x)^T \,d\mu=\int x x^T\,d\mu - \bar{x}\bar{x}^T
\]
where $\bar x=\int x\,d\mu(x)$ is the mean of $\mu$.

From Lemma \ref{lem: lemW2} and \cite[Main Theorem, Sec. 2]{dowson1982frechet} we obtain the following bounds for the Wasserstein distance between two probability measures in terms of their means and covariance matrices.

\begin{lemma}[\cite{dowson1982frechet}]
\label{lem: W2estimate}
Let $m,m'$ be the means and $\Sigma$ and $\Sigma'$ be the covariance matrices of $\mu$ and $\nu$ respectively. Then
\begin{equation}
\label{eq: 2bounds}
|m-m'|^2+\mathrm{tr}[\Sigma+\Sigma'-2(\Sigma\Sigma')^{1/2}]\leq W_2^2(\mu,\nu)\leq |m-m'|^2+\mathrm{tr}[\Sigma+\Sigma'+2(\Sigma\Sigma')^{1/2}]. 
\end{equation}
\end{lemma}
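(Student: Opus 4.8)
The plan is to reduce the whole statement, via Lemma~\ref{lem: lemW2}, to a bound on the centered quantity $\min_{X\sim\mu,\,Y\sim\nu}\mathbb{E}\bigl|(X-m)-(Y-m')\bigr|^{2}$, and then further to a bound on a single scalar: the trace of the cross-covariance matrix of a coupling. Concretely, fix any coupling $\pi\in\Pi(\mu,\nu)$, put $\widetilde X=X-m$, $\widetilde Y=Y-m'$, and let $K=\mathbb{E}[\widetilde Y\widetilde X^{T}]$ be the associated cross-covariance. Expanding the square and using linearity and cyclicity of the trace gives
\[
\mathbb{E}\bigl|(X-m)-(Y-m')\bigr|^{2}=\mathrm{tr}(\Sigma)+\mathrm{tr}(\Sigma')-2\,\mathrm{tr}(K),
\]
whence, by Lemma~\ref{lem: lemW2},
\[
W_2^{2}(\mu,\nu)=|m-m'|^{2}+\mathrm{tr}(\Sigma)+\mathrm{tr}(\Sigma')-2\sup_{\pi\in\Pi(\mu,\nu)}\mathrm{tr}(K).
\]
Setting $\tau:=\mathrm{tr}\bigl((\Sigma\Sigma')^{1/2}\bigr)$, both inequalities in \eqref{eq: 2bounds} will follow once I show that $\mathrm{tr}(K)\le\tau$ for every coupling, and that the independent coupling already certifies the asserted upper bound.

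The crux is the estimate $\mathrm{tr}(K)\le\tau$. The only structural constraint tying $K$ to $\Sigma,\Sigma'$ is that the joint covariance matrix $\left(\begin{smallmatrix}\Sigma & K^{T}\\ K & \Sigma'\end{smallmatrix}\right)$ is positive semidefinite. Assume first that $\Sigma$ and $\Sigma'$ are nonsingular. A Schur-complement computation then shows that this block positive semidefiniteness is equivalent to writing $K=\Sigma^{1/2}\,C\,{\Sigma'}^{1/2}$ for some matrix $C$ of operator norm at most $1$. Hence $\mathrm{tr}(K)=\mathrm{tr}\bigl(C\,{\Sigma'}^{1/2}\Sigma^{1/2}\bigr)$, and von Neumann's trace inequality (equivalently, the duality between operator norm and nuclear norm) yields $\mathrm{tr}(K)\le\bigl\|{\Sigma'}^{1/2}\Sigma^{1/2}\bigr\|_{*}$, the sum of the singular values of ${\Sigma'}^{1/2}\Sigma^{1/2}$. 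These singular values are the square roots of the eigenvalues of the positive semidefinite matrix $\Sigma^{1/2}\Sigma'\Sigma^{1/2}$, which has the same spectrum as $\Sigma\Sigma'$ (the two are conjugate via $\Sigma^{1/2}$); therefore $\bigl\|{\Sigma'}^{1/2}\Sigma^{1/2}\bigr\|_{*}=\mathrm{tr}\bigl((\Sigma^{1/2}\Sigma'\Sigma^{1/2})^{1/2}\bigr)=\tau$. The singular case $\det\Sigma=0$ or $\det\Sigma'=0$ is handled by replacing $\Sigma,\Sigma'$ with $\Sigma+\delta I,\Sigma'+\delta I$, applying the nonsingular estimate, and letting $\delta\downarrow 0$, using continuity of $A\mapsto\mathrm{tr}(A^{1/2})$ on positive semidefinite matrices.

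Granting $\mathrm{tr}(K)\le\tau$, the lower bound in \eqref{eq: 2bounds} is immediate from the displayed formula for $W_2^{2}(\mu,\nu)$. For the upper bound it suffices to evaluate at the independent coupling $\pi=\mu\otimes\nu$, for which $K=0$ and hence $W_2^{2}(\mu,\nu)\le|m-m'|^{2}+\mathrm{tr}(\Sigma)+\mathrm{tr}(\Sigma')$; since $(\Sigma\Sigma')^{1/2}$ is positive semidefinite, $\tau\ge 0$, so this is in particular bounded by $|m-m'|^{2}+\mathrm{tr}[\Sigma+\Sigma'+2(\Sigma\Sigma')^{1/2}]$. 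I expect the main obstacle to be the middle paragraph — turning block positive semidefiniteness into the sharp scalar inequality $\mathrm{tr}(K)\le\tau$ and treating singular covariances cleanly; the rest is either cited (Lemma~\ref{lem: lemW2}) or a one-line trace manipulation. As a side remark, the lower bound is actually an equality: choosing $C$ to be the partial isometry from the polar decomposition of $\Sigma^{1/2}{\Sigma'}^{1/2}$ makes $K=\Sigma^{1/2}C{\Sigma'}^{1/2}$ attain $\mathrm{tr}(K)=\tau$, and a Gaussian coupling with this cross-covariance realizes it — though this sharpening is not needed for \eqref{eq: 2bounds}.
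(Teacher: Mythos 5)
Your proof is correct, but it does more than the paper does: in the paper this lemma is a quoted result, attributed to Dowson--Landau \cite{dowson1982frechet}, and the ``proof'' there consists of combining Lemma~\ref{lem: lemW2} with the Main Theorem of that reference, with no argument supplied. What you have written is essentially the standard proof of the cited theorem itself: reduce to centered variables via Lemma~\ref{lem: lemW2}, identify $\mathbb{E}\lvert\widetilde X-\widetilde Y\rvert^{2}=\mathrm{tr}(\Sigma)+\mathrm{tr}(\Sigma')-2\,\mathrm{tr}(K)$, and bound $\mathrm{tr}(K)$ by $\mathrm{tr}\bigl((\Sigma^{1/2}\Sigma'\Sigma^{1/2})^{1/2}\bigr)$ using block positive semidefiniteness of the joint covariance, the Schur-complement contraction factorization, and von Neumann's trace inequality, with a $\delta$-regularization for singular covariances. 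All the delicate points are handled: block PSD is only needed as a \emph{necessary} condition on $K$, which is all your inequality uses; the ordering $K=\Sigma^{1/2}C\Sigma'^{1/2}$ versus $\Sigma'^{1/2}C\Sigma^{1/2}$ is immaterial for the trace since transposing preserves the contraction property; and you correctly note that $\mathrm{tr}\bigl((\Sigma\Sigma')^{1/2}\bigr)=\mathrm{tr}\bigl((\Sigma^{1/2}\Sigma'\Sigma^{1/2})^{1/2}\bigr)$ because the two matrices are similar, which reconciles your $\tau$ with the form appearing in \eqref{eq: 2bounds}. Your observation that the stated upper bound follows already from the independent coupling (which gives the stronger bound without the $+2(\Sigma\Sigma')^{1/2}$ term) is also right, and your closing remark that the lower bound is attained by a Gaussian coupling with cross-covariance built from the polar decomposition is the sharpening proved in \cite{dowson1982frechet}. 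In short: the paper buys brevity by citation; your route buys a self-contained justification at the cost of the linear-algebraic machinery, and it is sound.
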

Note that in one dimension, the bounds \eqref{eq: 2bounds} become
\[
(m-m')^2+\Big[\sqrt{\Sigma}-\sqrt{\Sigma'}\Big]^2\leq W_2^2(\mu,\nu)\leq (m-m')^2+\Big[\sqrt{\Sigma}+\sqrt{\Sigma'}\Big]^2. 
\]
Employing the above Lemma, we obtain the following quantitative lower-bound and upper-bound estimates of the error between the reduced dynamics and the original one via their means and covariances.   
\begin{proposition} Let $\mu(t)$ and $\mu'(t)$ be the distributions of $z(t)$ and $z^{\rm red}(t)$ respectively. Then
\begin{multline*}
\Big[\langle z(t)\rangle-\bar z(t)\Big]^2+\Big[\sqrt{p_{zz}(t)-\langle z(t)\rangle^2}-\sqrt{p_{zz}(t)-\bar z(t)^2}\,\Big]^2 \leq W_2^2(\mu(t),\mu'(t))\\ \leq \Big[\langle z(t)\rangle-\bar z(t)\Big]^2+\Big[\sqrt{p_{zz}(t)-\langle z(t)\rangle^2}+\sqrt{\bar p_{zz}(t)-\bar z(t)^2}\,\Big]^2    
\end{multline*}
\end{proposition}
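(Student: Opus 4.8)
The plan is to obtain the Proposition as an immediate consequence of Lemma~\ref{lem: W2estimate} specialized to dimension $d=1$, applied at each fixed time $t$ to the pair $\big(\mu(t),\mu'(t)\big)$. The only preliminary point is to check that both laws lie in $\mathcal{P}_2(\R)$, so that $W_2$ is well defined and finite: the original variable satisfies $x(t)^2+y(t)^2+z(t)^2=1$ by \eqref{eq: SDE3} together with the unit normalization, hence $|z(t)|\le 1$ almost surely and $\mu(t)$ has moments of every order; the reduced variable $z^{\rm red}(t)$ solves the linear additive-noise SDE \eqref{redadd}, so it is Gaussian with finite variance, and $\mu'(t)\in\mathcal{P}_2(\R)$ as well.

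Next I would match the means and covariances appearing in \eqref{eq: 2bounds} with the quantities in the statement. The mean of $\mu(t)$ is $m(t)=\langle z(t)\rangle=\E[z(t)]$, which solves \eqref{zdet}; since $p_{zz}(t)=\E[z(t)^2]$ is by definition the $zz$-entry of $P(t)$ in \eqref{eq: autocorrelation matrix}, the variance of $\mu(t)$ is $\Sigma(t)=p_{zz}(t)-\langle z(t)\rangle^2$. Likewise, the mean of $\mu'(t)$ is $m'(t)=\bar z(t)=\E[z^{\rm red}(t)]$, governed by \eqref{eq: bar Az}, while its second moment $\bar p_{zz}(t)=\E\big[z^{\rm red}(t)^2\big]$ is obtained by applying \eqref{eq: covariance matrix} to \eqref{redadd}, so that the variance of $\mu'(t)$ is $\Sigma'(t)=\bar p_{zz}(t)-\bar z(t)^2$. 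The decomposition of $\E|X-Y|^2$ into the squared difference of means plus a centered term, recorded in Lemma~\ref{lem: lemW2}, is what reduces the problem to bounding $\min\E|(X-m)-(Y-m')|^2$ from above and below in terms of $\Sigma$ and $\Sigma'$, and in one dimension this is precisely the content of the scalar estimates $(\sqrt{\Sigma}-\sqrt{\Sigma'}\,)^2\le \E|(X-m)-(Y-m')|^2\le(\sqrt{\Sigma}+\sqrt{\Sigma'}\,)^2$.

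With these identifications in hand one simply substitutes $m=\langle z(t)\rangle$, $m'=\bar z(t)$, $\Sigma=p_{zz}(t)-\langle z(t)\rangle^2$ and $\Sigma'=\bar p_{zz}(t)-\bar z(t)^2$ into the one-dimensional form of \eqref{eq: 2bounds}, namely
\[
(m-m')^2+\big[\sqrt{\Sigma}-\sqrt{\Sigma'}\,\big]^2\leq W_2^2(\mu,\nu)\leq (m-m')^2+\big[\sqrt{\Sigma}+\sqrt{\Sigma'}\,\big]^2 ,
\]
which yields exactly the asserted chain of inequalities. Since the argument reduces to a direct substitution into an already-established lemma, there is no genuine obstacle; the only items needing care are (i) the integrability check described above, and (ii) keeping the bookkeeping consistent — in particular writing the variance of $z^{\rm red}$ through its \emph{own} second moment $\bar p_{zz}(t)$ rather than $p_{zz}(t)$, and recalling that in the final reduced model \eqref{redadd} one has $\bar B_z=0$ and $\bar D_z=\sqrt{-\frac{2}{3}\bar A_z}$, so that $\bar p_{zz}(t)=\frac{1}{3}+\big(\bar p_{zz}(0)-\frac{1}{3}\big)e^{2\bar A_z t}$ and the bound becomes fully explicit.
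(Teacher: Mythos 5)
Your proposal is correct and follows essentially the same route as the paper: the proposition is obtained by specializing Lemma~\ref{lem: W2estimate} to dimension one with $m=\langle z(t)\rangle$, $m'=\bar z(t)$, $\Sigma=p_{zz}(t)-\langle z(t)\rangle^2$, $\Sigma'=\bar p_{zz}(t)-\bar z(t)^2$. Your added checks (square-integrability of $\mu(t)$, $\mu'(t)$ and the explicit form of $\bar p_{zz}(t)$) are sound but not needed beyond what the paper's one-line substitution already provides; note also that your writing $\bar p_{zz}(t)$ in the lower bound is the consistent form, the paper's statement displaying $p_{zz}(t)-\bar z(t)^2$ there being a slip.
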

\begin{proof}
The statement of this proposition is a direct consequence of Lemma \ref{lem: W2estimate} applying to the exact dynamics $z(t)$ and the reduced dynamics $z^{\rm red}(t)$ with the means and covariance given by 
\begin{align*}
&m=m(t)=\mathbb{E}(z(t))=\langle z(t)\rangle,\quad m'=m'(t)=\mathbb{E}(
z^{\rm red}(t))=\bar z(t),\\
& \Sigma=\Sigma(t)=\mathbb{E}(z(t)^2)-(\langle z(t)\rangle)^2=p_{zz}(t)-\langle z(t)\rangle^2,\\
& \Sigma'=\Sigma'(t)=\mathbb{E}(z^{\rm red}(t)^2)-(\langle z^{\rm red}(t)\rangle)^2=\bar p_{zz}(t)-\bar z(t)^2,
\end{align*}
Therefore, we get
\begin{multline*}
\Big[\langle z(t)\rangle-\bar z(t)\Big]^2+\Big[\sqrt{p_{zz}(t)-\langle z(t)\rangle^2}-\sqrt{p_{zz}(t)-\bar z(t)^2}\,\Big]^2 \leq W_2^2(\mu(t),\mu'(t))\\ \leq \Big[\langle z(t)\rangle-\bar z(t)\Big]^2+\Big[\sqrt{p_{zz}(t)-\langle z(t)\rangle^2}+\sqrt{\bar p_{zz}(t)-\bar z(t)^2}\,\Big]^2.  
\end{multline*}
This completes the proof of this proposition.
\end{proof}



\section{Conclusion}
\label{sec: conclusion}


Multivariate Brownian motions provide a fundamental mathematical framework for modeling systems of interacting stochastic variables. They are widely used in areas such as financial mathematics, engineering, and biological modeling, where they capture both individual randomness and complex interactions among components. However, as the dimensionality increases, the analysis of these systems becomes substantially more challenging because of intricate correlation structures and escalating computational demands.

In this work, we introduced two complementary approaches for dimensionality reduction in multivariate geometric Brownian motions. The first is an extension of the hybrid invariant manifold method that systematically incorporates the effects of multiplicative noise while isolating the slow component of the drift. The second is an analytical framework based on examining the higher‑order ordinary differential equations governing the evolution of lower‑order statistical moments. The effectiveness of both approaches was demonstrated through a GBM model inspired by localization effects in a two‑state quantum model, where the reduced model successfully retains the main properties of the full dynamics. The low‑dimensional structure of the quantum model allows explicit computation of all relevant quantities.
In principle, the reduction scheme presented in Section\ref{sec: reduction scheme} can be extended to higher‑dimensional systems. The main technical challenges in such generalizations arise from solving the Lyapunov equation \eqref{eq: red Lyapunov cond} and the invariance equation associated with the invariant manifold method. Lyapunov equations in high‑dimensional settings have been extensively studied both analytically and numerically; see, for example, \cite{kuvcera1973review,kirsten2020order}. For the invariance equation, approximate solutions can be constructed using the Chapman–Enskog method, which provides asymptotic expansions to arbitrary order in the relevant small parameter \cite{Gor2018,CDM22}. 

Future work will focus on extending the proposed techniques to high‑dimensional systems of physical relevance, as well as to nonlinear extensions of the two‑state model \cite{blanchard2000effective,blanchard2000classical} and GBMs with nonlinear drifts \cite{giordano2023infinite}.

\appendix


\section{Two-variable formulation of the two-state quantum system}
\label{app:appA}

The system of linear SDEs \eqref{meq}, which describes a two-state quantum system, serves as the main theoretical testbed for applying our model reduction procedure. Here we demonstrate how these equations can be reduced to a system of two stochastic equations using two different approaches, one employing an alternative coordinate system (with the trade-off being a loss of linearity in the resulting formulation) and the other based on the conservation of the spin.
Given that the process is constrained to evolve on a unit sphere, we can reformulate the stochastic equations \eqref{meq} adopting spherical coordinates:
\begin{equation}
\begin{aligned}
x& = \cos \! \theta \cos \! \phi,\\    
y& = \cos \! \theta \sin \! \phi,\\
z& =  \sin \! \theta,
\end{aligned} 
\label{spherical}
\end{equation}
where $\theta \in [- \frac \pi 2,  \frac \pi 2 ]$ and  $\phi \in [0,  2 \pi )$. The above definitions
(\ref{spherical}) can be inverted as
\begin{equation}
\begin{aligned}
\theta& = \arcsin \! \left( z\right),\\
\phi& = \arctan \! \left( \frac y x \right).
\end{aligned} 
\label{spherical2}
\end{equation}
From the first definition in (\ref{spherical2}) and 
from the third equation in (\ref{meq}) 
we have respectively the first and the second equality below
\begin{equation}
d \theta =  \frac{dz}{\sqrt{1-z^2}} =\frac{2\alpha y\, dt}{\sqrt{1-z^2}}
=2\alpha \sin \! \phi \,dt, 
\label{theta}
\end{equation}
while the third equality is obtained from the last two definitions in (\ref{spherical})
taking into account that $\cos \! \theta$ is always non negative. 

\noindent
From the first stochastic equation in (\ref{meq}) and from the first two definitions 
in (\ref{spherical}) we immediately have
\begin{equation}
dx=-2\beta^2 \cos \! \theta \cos \! \phi \,dt-2\beta \cos \! \theta \sin \! \phi \, dW_t,
\label{dxphitheta}
\end{equation}
moreover, from the first definition in (\ref{spherical}) we also have
the first equation below (the term $d \! \cos \! \phi \, d \! \cos \! \theta$ vanishes because 
the variable $\theta$, according to (\ref{theta}), evolves deterministically)
and the second equation below is obtained from  (\ref{theta}) 
taking into account that $d \! \cos \! \theta = - \sin \! \theta d  \theta$:
\begin{equation}
\begin{aligned}
d  x = \cos \! \phi \, d \! \cos \! \theta + \cos \! \theta \, d \! \cos \! \phi =
- 2\alpha \sin \! \theta   \sin \! \phi \cos \! \phi \,dt + \cos \! \theta \, d \! \cos \! \phi,
\end{aligned}
\label{dxphitheta2}
\end{equation}
Since both terms at the right of equations (\ref{dxphitheta}) and (\ref{dxphitheta2})
equal $dx$, we immediately obtain
\begin{equation}
d  \! \cos \! \phi = 
-2\beta^2  \cos \! \phi \,dt-2\beta \sin \! \phi \, dW_t 
+ 2\alpha \tan \! \theta   \sin \! \phi \cos \! \phi \,dt,
\label{dcosphi}
\end{equation}
which, in turn, implies
\begin{equation}
d  \phi =  2\beta  \, dW_t
- 2\alpha \tan \! \theta  \cos \! \phi \,dt,
\label{phi}
\end{equation}
in fact, using It${\hat {\rm o}}$'s calculus one has 
$ \,d  \! \cos \! \phi = \! -\! \sin \! \phi \, d \phi  - \frac 1 2  \! \cos \! \phi (d \phi)^2$
where for $(d \phi)^2$ only terms of order $dt$ are retained,
$i.e.$,  $(d \phi)^2= 4 \beta^2 dt$, so that (\ref{dcosphi}) is an immediate consequence 
of  (\ref{phi}) .

In conclusion, the system of equations (\ref{meq}), \red{which is a reformulation of the original two-state quantum system considered in Section \ref{sec: two state system} using} the three cartesian variables \red{$(x,y,z)$},
is replaced by the following system of equations for the latitude and longitude variables:
\begin{equation}
\begin{aligned}
d &\theta =2\alpha \sin \! \phi \,dt,  \\
d  &\phi =  2\beta \, dW_t - 2\alpha \tan \! \theta  \cos \! \phi \,dt.
\end{aligned} 
\label{final}
\end{equation}
The second of the above equations can be replaced by the following equation for the variable $\sin \phi$:
\begin{equation}
d  \sin \! \phi =  - 2 \beta^2 \sin \! \phi \, dt
+ 2\beta  \cos \! \phi \, dW_t - 2\alpha \tan \! \theta  (\cos \! \phi)^2\,dt.
\label{final2}
\end{equation}
 
An alternative method to reduce Eqs.~\eqref{meq} to a system of two stochastic equations originates from the conservation of the spin.
Namely, using the relation $|x|^2+|y|^2+|z|^2=1$, if we remove only $x$ we obtain
\begin{subequations}
\label{eq: yzSDE}
\begin{align}
dy&=-2\beta^2 y\,dt-2\alpha z \,dt+2\beta \sqrt{1-|y|^2-|z|^2}\, dW_t,\\
dz&=2\alpha y\, dt.
\end{align} 
\end{subequations}


Vice versa, suppose we have \eqref{eq: yzSDE}, and let 
$x=\sqrt{1-y^2-z^2}$. Then by applying Ito's lemma we have
\begin{align*}
    dx&=-\frac{y}{\sqrt{1-y^2-z^2}}\,dy-\frac{z}{\sqrt{1-y^2-z^2}}\,dz+2\beta^2 (1-y^2-z^2)\frac{z^2-1}{(1-y^2-z^2)^{3/2}}\, dt
    \\&=-\frac{y}{x}\Big(-2\beta^2 y\,dt-2\alpha z \,dt+2\beta x\, dW_t\Big)-\frac{z}{x}2\alpha y\, dt-2\beta^2\frac{1-z^2}{x}\,dt
    \\&=-2\beta^2\frac{-y^2+1-z^2}{x}\,dt-2\beta y\, dW_t
    \\&=-2\beta^2 x\, dt-2\beta y\, dW_t.
\end{align*}
Thus, the reduced model \eqref{eq: yzSDE} turns out being equivalent to the original one.
We emphasize that the identity of the Brownian motions governing the $x$ and $y$ components is crucial for the above derivation. 
\section{Model reduction of multivariate geometric Brownian motions}
\label{sec: GBMs}

In this section, we begin by recalling key properties of GBMs. We then present two complementary model reduction approaches: one based on the analysis of higher-order ODEs, and the other employing the IM method. These approaches are developed in parallel to emphasize their respective advantages.



\subsection{Dynamics of the mean and \RV{second moments} of GBMs} 
\label{sec: means-2ndmoments}
\RV{In this section, following~\cite{mao2007stochastic,kloeden2013numerical}, we recall the dynamics of the mean and second moments of the GBM \eqref{generalSDE}, which} play a central role in the analysis of GBMs. We compute the mean
\[
m(t)=\E[X(t)],
\]
and the \RV{second-moment} matrix
\begin{equation}
\label{eq: autocorrelation matrix}
P(t)=\E[X(t)X(t)^T]\in \R^{n\times n}.
\end{equation}
We have
\[
dm(t)=d\E(X(t))=\E[A X\RV{(t)}\,dt +B X\RV{(t)} dW\RV{(t)} +D dU\RV{(t)}]=Am(t)\,dt.
\]
Thus $m(t)$ solves the ODE
\begin{equation}
\label{eq: mean value}
\dot{m}(t)= Am(t), \quad m(0)=m_0=\E(X_0),
\end{equation}
which has the explicit solution
\[
m(t)=e^{At}m_0.
\]
We now derive the evolution equation for the \RV{second-moment} matrix $P(t)$. Applying \^{I}to's product rule yields
\begin{align*}
d P(t)&=d\, \E[X(t)X(t)^T]
\\&=\E\Big[X\RV{(t)}dX\RV{(t)}^T+ dX\RV{(t)} X\RV{(t)}^T + dX\RV{(t)} dX\RV{(t)}^T\Big]
\\&=\E\Big[X\RV{(t)}(X\RV{(t)}^T A^T\, dt+X\RV{(t)}^T B^T dW\RV{(t)}^T+D^T dU\RV{(t)}^T)+(AX\RV{(t)}\,dt+BX\RV{(t)} dW\RV{(t)}+D dU\RV{(t)}) X\RV{(t)}^T\\
&\qquad+ (BX\RV{(t)}X\RV{(t)}^TB^T+ DD^T) dt\Big]
\\&= \Big[P(t) A^T + A P(t)+B P(t) B^T+ DD^T]\,dt.
\end{align*}
Hence, $P(t)$ satisfies the matrix differential equation
\begin{equation}
\label{eq: covariance matrix}
\dot{P}(t)=AP+ P A^T+ B P B^T+DD^T, \quad P(0)=P_0=X_0X_0^T.
\end{equation}
We can reformulate equation \eqref{eq: covariance matrix} as a standard linear system of ODEs. Let $\P\in\R^{n^2}$ be the column vector formed by stacking the entries of the matrix 
$P$ row by row:
\[
\P=(P_{11}, P_{12},\ldots, P_{1n},P_{21},\ldots, P_{2n},\ldots,P_{n1},\ldots, P_{nn})^T\in \mathbb{R}^{n^2}.
\]
For each $k\in \{1,\dots,n^2\}$ define
\[
\P_k=P_{ij},\quad i= \left\lceil \frac{k}{n}\right\rceil,~ j=k-(i-1)n.
\]
Then  Eq. \eqref{eq: covariance matrix} is equivalent to the following linear system of $n^2$ coupled ODEs:
\begin{equation}
    \label{eq: linear ODE}
\dot{\P}=\H\P+D\otimes D^T,
\end{equation}
where $\H\in \R^{n^2\times n^2}$ given by 
\begin{equation}
\label{eq: formula H}
  \H=A\otimes I+I\otimes A+ B\otimes B^T,
\end{equation}
where for $U=(u_{ij}), V=(v_{ij})\in \R^{n\times n}$, $U\otimes V\in\R^{n^2\times n^2}$ denotes their Kronecker product, that is
\[
U\otimes V=\begin{pmatrix}
    u_{11}V& u_{12}V&\ldots& u_{1n}V\\
    u_{21}V&u_{22}V&\ldots& u_{2n}V\\
    \vdots&\vdots&\vdots&\vdots\\
    u_{n1}V& u_{n2}V&\ldots&u_{nn}V
\end{pmatrix}.
\]

\subsection*{Stationary \RV{second-moment} matrix}
Let $P_\infty$ denote the stationary \RV{second-moment} matrix of the general GBM. It follows from equation \eqref{eq: covariance matrix} that $P_\infty$ satisfies the following Lyapunov-type matrix equation
\begin{equation}
\label{eq: stationary covariance matrix}
AP_\infty+P_\infty A^T+B P_\infty B^T+DD^T=0.
\end{equation}
Equivalently, in terms of the vector $\P_\infty=\lim_{t\rightarrow \infty} \P(t)\in \R^{n^2}$, the stationary condition can be written as:
\[
\H \P_\infty+D\otimes D^T=0.
\]
\RV{The existence of solutions to \eqref{eq: stationary covariance matrix} and the conditions on the convergence of $\P(t)$ to $\P_\infty$ as $t\rightarrow \infty$ can be found in \cite{kuvcera1973review}.}

\subsection{Model reduction of systems of linear ODEs}
\label{sec:HamCal}

We derive reduced equations for the mean and \RV{second-moment} dynamics given in \eqref{eq: mean value} and \eqref{eq: covariance matrix}, respectively. Both equations are systems of coupled linear ODEs. We begin by showing that, starting from a general system of coupled linear ODEs, one can derive an exact closed higher-order differential equation for each component of the original system. This result is of independent interest and may prove useful in other contexts.\footnote{While this technique appears to be known in principle, we could not locate a specific reference in the literature.}
\RV{This finding was formally introduced in Sec.~\ref{sec: higher order ODEs} as Proposition~\ref{eq: general derivation}, which we restate and prove below.}

\HamCal*

\begin{proof}
It follows from \eqref{eq: eqnu} that, for all $i\geq 0$, we have 
\begin{equation}
\label{eq: higher derivative}
 u^{(i)}(t)=\frac{d^i u}{dt^i}=F^i u,
\end{equation}
\red{where $F^i$ is the $i$-th power of the matrix $F$.}
Since $p_n$ is the characteristic polynomial of $F$, by Hamilton-Cayley theorem, we have
 \[
 p_n(F)=0.
 \]
 It follows from \eqref{eq: higher derivative} that
 \[
  \sum_{i=0}^n a_i u^{(i)}(t)= \sum_{i=0}^n a_i F^i u= p_n(F) u=0,
 \]
 which is the desired equality \eqref{eq: high derivative eqn}.
\end{proof}
Note that \eqref{eq: high derivative eqn} is an equation for $u(t)\in \R^n$. Thus, each of the components $u_i(t)$, $ i=1,\ldots, n$, of the vector $u(t)$ also satisfies \eqref{eq: high derivative eqn}, which must be equipped with a suitable set of initial conditions \red{obtained by iteratively computing $u^{(i+1)}(0)=Fu^{(i)}(0)$ for $i=0,\ldots, n-2$.}
\begin{remark} In this remark, we provide further comments on Proposition \ref{eq: general derivation}.
\begin{enumerate}
    \item Proposition \ref{eq: general derivation} still holds true if the characteristic polynomial of $F$ is replaced by its minimal polynomial or any annihilating polynomial, since these polynomials also satisfy the Hamilton-Cayley theorem used in Proof of Proposition \ref{eq: general derivation}. By using the minimal polynomial instead of the characteristic one, one may obtain an exact, lower order differential equation for each of the component of $u$.
\item The reverse direction in Proposition \ref{eq: general derivation} also holds true, that is the higher-order differential equation \eqref{eq: high derivative eqn} can be transformed to a system of first order ODEs; however, note that this can be done in many different ways. We observe that two systems of first-order ODEs will give rise to the same higher-order ODE if they share the same characteristic polynomial.
\end{enumerate}
\end{remark}


\subsection{Derivation of the fully reduced SDE}
\label{sec: multipleD}
\RV{In this section, we discuss possible extensions of our reduction scheme discussed in Section \ref{sec: reduction scheme} when the reduced dynamics is $k$-dimensional ($1\leq k<n$) mixed Ornstein–Uhlenbeck (OU) and geometric Brownian process, namely $X^{\rm red}\in \R^k$ satisfies the following SDE:
\begin{equation*}
  d X^{\rm red}(t)= \bar A X^{\rm red}(t)\, dt+\bar B X^{\rm red}(t) dw(t)+\bar D du(t),
\end{equation*}
where $\bar A, \bar B, \bar D\in \R^{k\times k}$ are the reduced drift and diffusion coefficients, $w(t)$ and $u(t)$ are independent Wiener processes of $1$ and $k$ dimension, respectively. In this case, the corresponding reduced model for the deterministic part and \RV{second-moment} equations, respectively, become
    \[
    \dot{\bar{X}}=\bar{A} \bar{X}+\bar{a},\quad\text{and}\quad \dot{\tilde{p}}= \bar{H} \tilde{p}+\bar{h},
    \]
    where $\bar{A}\in \R^{k\times k}, \bar a\in \R^k$, $\bar{H}\in \R^{k^2\times k^2}$, $\bar{h}\in \R^{k^2}$. The diffusion matrices $\bar{B}, \bar{D} \in \R^{k\times k}$ satisfy the Lyapunov matrix equation \begin{equation*} 
\bar A \bar P_\infty+\bar P_\infty \bar A^T+\bar B \bar P_\infty \bar B^T+\bar D \bar D^T=0 \; ,
\end{equation*}  
where $\bar P_\infty\in \R^{k\times k}$ is the sub-matrix of the original invariant matrix $P_\infty$ corresponding to the resolved variables. We find $\bar B$ and $\bar D$ such that the \RV{second-moment} dynamics of the reduced dynamics \eqref{eq: general reduced SDE}, $\bar P$ is closest to the reduced second-moment dynamics obtained in the second step in a suitable norm $\|\cdot\|$ (for instance the $L^\infty$-norm or the $L^2$-norm)
\begin{equation*}
\min_{\bar B,\bar D}\|\bar P-\tilde{p}\|.
\end{equation*}
Both dynamics are inhomogeneous linear ODEs and can be solved using Duhamel’s principle. A multi-dimensional counterpart of Lemma \ref{lem: lem1} is
\[
\sup_{t\geq 0}\|e^{At}-e^{Bt}\|_F^2,
\]
where $\|\cdot\|$ denote the Frobenius norm. Then the equation determining the optimal time $t^*$ becomes
\[
g(t):=\mathrm{tr}\Big[(e^{At}-e^{Bt})^T(Ae^{At}-Be^{Bt})\Big]=0,
\]
which is a nonlinear equation. In some special cases, for instance, when $A$ and $B$ are symmetric and commute the above equation can be simplified significantly by diagonalizing $A$ and $B$. However, in general cases, one needs to invoke to numerical methods such as Newton's method. Additionally, if there exist $\bar B, \bar D$ such that
\[
    \bar{H}= \bar{A}\otimes I+I\otimes\bar{A}+\bar{B}\otimes \bar{B}^T \quad \text{and}\quad \bar h= \bar D\otimes\bar D^T,
    \]
where $I$ is the $k$-dimensional identity matrix, then we can take $\bar P=\tilde{p}$.}
\section{Eigenvalues of $M$}
\label{app:app2}
A straightforward, albeit lengthy, calculation yields the characteristic determinant of $M$ defined in \eqref{eq: 4ODEs}:
\begin{equation}
\label{eq: det M}
p_{\lambda}(M)=|M-\lambda I|=\lambda(\lambda^3 + 10\beta^2\lambda^2 + (16\beta^4 + 16\alpha^2)\lambda + 96\alpha^2\beta^2):=\lambda g(\lambda),
\end{equation}
where we have defined
\[
g(\lambda):=\lambda^3 + 10\beta^2\lambda^2 + (16\beta^4 + 16\alpha^2)\lambda + 96\alpha^2\beta^2.
\]
Thus $M$ always has a zero eigenvalue.  To determine the number of real nonvanishing eigenvalues of $M$ we need to determine the number of real roots of $g$, which is a cubic polynomial with a discriminant given by
\begin{align*}
    \Delta&=1024 \beta^{12} \left[ 9 - 103(\frac{\alpha^2}{\beta^4}) + 4(\frac{\alpha^2}{\beta^4})^2 - 16(\frac{\alpha^2}{\beta^4})^3 \right]
    \\&=1024 \beta^{12}[ 9 - 103r + 4r^2 - 16r^3 ]=:f(r),
\end{align*}
where $r:= \alpha^2/\beta^4$. Direct computation shows that $f'(r) < 0$ for all $r$. Hence $f(r)$ is monotonically decreasing in $r$. Furthermore, it has a real root $r_1\approx0.0876$. Thus as $r\leq r_1$, $f(r)\geq f(r_1)=0$, namely as 
\[
\frac{\alpha^2}{\beta^4}\leq r_1
   \iff |\alpha|\leq\sqrt{r_1}\beta^2,
\]
we have $f(r)\geq0$. It follows that the discriminant of the cubic polynomial $g$ satisfies
\begin{align*}
    \Delta=1024 \beta^{12}f(r)\geq0\quad\text{as}\quad |\alpha|<\sqrt{r}_1 \beta^2.
\end{align*}
Thus under the above relation between $\alpha$ and $\beta$,  $g$ has three real roots, which are all negative since
all its coefficients are positive. Hence $M$ has one zero-eigenvalue and three real negative eigenvalues if and only if $|\alpha|< \sqrt{r_1}\beta^2$. 
\section{Exact closed equation for the mean and second moment of $z$}
\label{app:app3}
\label{sec: exact closed explicit}
\RV{We first derive the exact equations for $m_y$ and $m_z$}. From \eqref{ydet} and \eqref{zdet} one can easily derive the exact closed equation for $m_y$:
\[
\ddot{m}_y=-2\beta^2 \dot{m}_y-2\alpha\dot{m}_z=-2\beta^2 \dot{m}_y-4\alpha^2 m_y,
\]
which thus has the structure \eqref{eq: exact deterministic y and z}. It remains to find the exact closed equation for $m_z$. On the one hand, from \eqref{ydet}, we have
\[
\frac{d}{dt}\Big(e^{2\beta^2 t}m_y\Big)=e^{2\beta^2 t}(\dot{m}_y+2\beta^2 m_y)=-2\alpha e^{2\beta^2 t}\,m_z.
\]
On the other hand, from \eqref{zdet}, we get
\[
\frac{d}{dt}\Big(e^{2\beta^2 t}m_y\Big)=\frac{1}{2\alpha}\frac{d}{dt}\Big(e^{2\beta^2 t}\dot{m}_z\Big)=\frac{1}{2\alpha}e^{2\beta^2 t}(\ddot{m}_z+2\beta^2 \dot{m_z}).
\]
Comparing the two expressions above, we obtain
\[
\ddot{m_z}+2\beta^2 \dot{m_z}=-4\alpha^2 m_z.
\]
which, as expected, also displays the structure of Eq.~\eqref{eq: exact deterministic y and z}.

Now we provide an explicit derivation for the exact closed equation for $p_{zz}$. The advantage of this derivation is that it automatically returns the values of the constant $C$ in \eqref{eq: pzz C} without the need of finding the equilibrium values.

Noting that $\dot{p}_{xx}+\dot{p}_{yy}+\dot{p}_{zz}=0$, thus, $p_{xx}+p_{yy}+p_{zz}$ is preserved, say equal to $1$ (this is the
same conservation as in Section \ref{sec: two state system}). Substituting $p_{xx}=1-p_{yy}-p_{zz}$ into the RHS of the equation for $\dot{p}_{yy}$, \eqref{eq: 4ODEs} reduces to the following system of three ODEs
\begin{subequations}
\label{eq: 3ODEs 1}
\begin{align}
\dot{p}_{yy}&=-4\beta^2 (2p_{yy}+p_{zz}-1)-4\alpha p_{yz}\label{eq1}\\
 \dot{p}_{yz}&=-2\beta^2 p_{yz}-2\alpha p_{zz}+2\alpha p_{yy}\label{eq2}\\
 \dot{p}_{zz}&=4\alpha p_{yz}\label{eq3}.
\end{align}
\end{subequations}
From this system, we will derive the exact, closed equation for $p_{zz}$. From \eqref{eq1} and \eqref{eq3}, we have
\begin{align}
    \frac{d}{dt}\Big[e^{8\beta^2 t}p_{yy}\Big]&=e^{8\beta^2 t}[\dot{p}_{yy}+8\beta^2 p_{yy}]\notag
    \\&=e^{8\beta^2 t}\Big[-4\beta^2 p_{zz}-4\alpha p_{yz}+4\beta^2\Big]\notag
\\&=e^{8\beta^2 t}\Big[-4\beta^2 p_{zz}-\dot{p}_{zz}+4\beta^2\Big].\label{eq4}
\end{align}
From \eqref{eq2} and \eqref{eq3} we have
\begin{align}
    \ddot{p}_{zz}&=4\alpha \dot{p}_{yz}\notag
    \\&=4\alpha[-2\beta^2 p_{yz}-2\alpha p_{zz}+2\alpha p_{yy}]\notag
    \\&=-2\beta^2 \dot{p}_{zz}-8\alpha^2 p_{zz}+8\alpha^2 p_{yy}.\label{eq5}
\end{align}
Combining \eqref{eq4} and \eqref{eq5} yields
\begin{align*}
\frac{d}{dt}\Big[e^{8\beta^2 t}\ddot{p}_{zz}\Big]&=\frac{d}{dt}\Big[e^{8\beta^2 t}(-2\beta^2 \dot{p}_{zz}-8\alpha^2 p_{zz}+8\alpha^2 p_{yy})\Big]
\\&=\frac{d}{dt}\Big[e^{8\beta^2 t}(-2\beta^2 \dot{p}_{zz}-8\alpha^2 p_{zz})\Big]+8\alpha^2 \frac{d}{dt}\Big[e^{8\beta^2 t}p_{yy}\Big]
\\&=e^{8\beta^2 t}\Big[(-2\beta^2\ddot{p}_{zz}-8\alpha^2 \dot{p}_{zz})+8\beta^2(-2\beta^2 \dot{p}_{zz}-8\alpha^2 p_{zz})+8\alpha^2(-4\beta^2 p_{zz}-\dot{p}_{zz}+4\beta^2)\Big].
\end{align*}
Expanding the LHS and canceling out the term $e^{8\beta^2 t}$ we obtain
\begin{equation*}\label{eq:3-order pzz}
\dddot{p}_{zz}+8\beta^2 \ddot{p}_{zz}=(-2\beta^2\ddot{p}_{zz}-8\alpha^2 \dot{p}_{zz})+8\beta^2(-2\beta^2 \dot{p}_{zz}-8\alpha^2 p_{zz})+8\alpha^2(-4\beta^2 p_{zz}-\dot{p}_{zz}+4\beta^2).
\end{equation*}
Finally, rearranging the corresponding terms, we obtain the following exact third-order differential equation for $p_{zz}$
\begin{equation}
\label{eq: exact en for pzz}
\dddot{p}_{zz}=-10\beta^2 \ddot{p}_{zz}-16(\alpha^2+\beta^4)\dot{p}_{zz}-96 \alpha^2\beta^2 p_{zz}+32 \alpha^2 \beta^2.
\end{equation}    
This equation recovers Eq. \eqref{eq: exact equation for variances}.

\section{An algebraic interpretation of the invariance equations}
\label{app:app4}

Let $Q,P,R \in \mathbb{R}^{2 \times 2}$ be given by
\begin{equation*}
Q = \begin{pmatrix}
q_{11} & q_{12} \\
q_{21} & q_{22}
\end{pmatrix}, \quad P = \begin{pmatrix}
0 & a \\
0 & 1
\end{pmatrix}, \quad
R =QP= \begin{pmatrix}
0 & aq_{11} + q_{12} \\
0 & aq_{21} + q_{22}
\end{pmatrix},
\label{defPQ}
\end{equation*}
with $a\in\mathbb{R}$. 
Note that $P$ is idempotent ($P^2=P$).


Define $\xi = a q_{21} + q_{22}=\mathrm{Tr}(R)$. The matrix $R$ has an eigenvalue $\xi$ which coincides with an eigenvalue of $Q$ if and only if $a$ satisfies the (invariance) equation
 \begin{equation}
q_{21} a^2 + (q_{22} - q_{11})a - q_{12} = 0 .
\label{IEgen}
\end{equation}
This is easily shown by noticing that Eq. \eqref{IEgen} is equivalent to 
\begin{equation*}
    \det(Q - \xi I) = \xi^2 - \text{Tr}(Q)\xi + \det(Q) = 0,
    \label{eigen}
\end{equation*}
upon substituting $\xi = a q_{21} + q_{22}$, with $I$ the $2\times 2$ identity matrix. 
Likewise, $\mathbf{u}=(a,1)^T$ is a common eigenvector of $Q$ and $R$, with eigenvalue $\xi$, if and only if \eqref{IEgen} holds. Indeed, formula \eqref{IEgen} coincides with the first component of the eigenvalue equation $R\mathbf{u}=\xi \mathbf{u}$.
Moreover, since $P\mathbf{u}=\mathbf{u}$ and $R=QP$, then $R \mathbf{u}=QP\mathbf{u}=Q\mathbf{u}=\xi \mathbf{u}$, therefore $Q$ and $R$ share the same eigenvector $\mathbf{u}$ associated with the common eigenvalue $\xi$.

Finally, we note that the invariance equation \eqref{IE1} introduced earlier appears as a special case of \eqref{IEgen} when the matrix $Q$ is given by the matrix $Q_{\epsilon}$ in \eqref{eq: deterministicODE}. 

This framework straightofrwardly extends to higher-dimensional systems.

\section{Proofs of Lemmas \ref{lem: lem1} and \ref{lem: lem2}}
\label{app:app6}
In this appendix we provide detailed proofs of Lemmas \ref{lem: lem1}-\ref{lem: lem2}.
\begin{proof}[Proof of Lemma \ref{lem: lem1}]
Without loss of generality, we assume $a<b<0$. Then
\[
f(t):=|e^{at}-e^{bt}|=e^{bt}-e^{at}.
\]
$f$ is a differentiable function of $t$ and, since $a,b<0$ we have
\[
f(0)=0,\lim_{t\rightarrow +\infty}f(t)=0.
\]
\red{Let $t^*\in \mathbb{R}$} solve $f'(t^*)=0$, that is
\[
f'(t^*)=be^{bt^*}-ae^{at^*}=0\quad \Leftrightarrow\quad e^{(b-a)t^*}=\frac{a}{b}.
\]
This gives
\[
t^*=\frac{1}{b-a}\ln\Big(\frac{a}{b}\Big).
\]
\red{Since $(b-a)>0$, when $0\leq t<t^*$, we have  $e^{(b-a)t}<e^{(b-a)t^*}=a/b$, thus $f'(t)>0$ and $f$ is increasing in $[0,t^*]$. On the other hand, when $t> t^*$, then $f'(t)<0$ and $f$ is decreasing in $[t,+\infty)$. This implies that $f$ achieves a maximum at $t=t^*$.}
Thus
\[
\sup_{t\geq 0}|e^{at}-e^{bt}|=f(t^*)=e^{bt^*}-e^{at^*}=\Big(\frac{a}{b}\Big)^\frac{b}{b-a}-\Big(\frac{a}{b}\Big)^\frac{a}{b-a},
\]
which is the claimed statement.
\end{proof}
\begin{proof}[Proof of Lemma \ref{lem: lem2}]
    Let $x:=\frac{a}{b}$. Then since $a<b_0\leq b<0$, we have
    \[
    x\geq \frac{a}{b_0}>1.
    \]
In addition, we have
    \[
    \frac{a}{b-a}=\frac{x}{1-x},\quad \frac{b}{b-a}=\frac{1}{1-x}.
    \]
Using these relations, we can write  $F$ in terms of $x$ as
\[
F(b)=x^{1/x}-x^{x/(1-x)}=:G(x).
\]
Thus the minimization problem becomes
\[
 \min_{a<b_0\leq b<0} F(b)=\min_{x\geq \frac{a}{b_0}>1} G(x).
\]
Since $x>1$, we have
\[
G'(x)=\log(x)\frac{x^{x/(1-x)}}{x-1}>0.
\]
Thus $G$ is increasing and therefore
\[
 \min_{a<b_0\leq b<0} F(b)=\min_{x\geq\frac{a}{b_0}>1}G(x)=G\Big(\frac{a}{b_0}\Big)=F(b_0).
\]
This completes the proof of the lemma.
\end{proof}

\section{Error estimates in $L^2$-norm}
\label{app: L2norm}
In this appendix, we consider another natural norm to measure the distance between two functions, which is the $L^2$-norm:
\[
\|f-g\|_2=\int_0^\infty (f(t)-g(t))^2\, dt.
\]
Using the $L^2$-norm, instead of \eqref{eq: general minimization problem}, we consider the following minimization problem
\begin{equation}
 \label{eq: min prob 2}   
\min_{\bar B_z}\|\tilde{p}_{zz}-\bar p_{zz}\|_2.
\end{equation}
Let $a<b_0\leq b<0$. In stead of the $L^\infty$-norm as in Lemmas \ref{lem: lem1}-\ref{lem: lem2}, now let's us compute the following function
\[
F(b)=\int_0^\infty(e^{at}-e^{bt})^2\,dt
\]
We have
\begin{align*}
F(b) &= \int_{0}^{\infty} \bigl(e^{at}-e^{bt}\bigr)^2\,dt \\
     &= \int_{0}^{\infty} \bigl(e^{2at}-2e^{(a+b)t}+e^{2bt}\bigr)dt \\
     &= \int_{0}^{\infty} e^{2at}dt - 2\int_{0}^{\infty} e^{(a+b)t}dt +\int_{0}^{\infty} e^{2bt}dt \\
     &= -\dfrac{1}{2a}+\frac{2}{a+b}-\dfrac{1}{2b}
\end{align*}
Differentiating $F(b)$ yields 
\begin{align*}
F'(b)= -\frac{2}{(a+b)^2} + \frac{1}{2b^2}=\frac{ (a + b)^2-4b^2}{2b^2 (a + b)^2}=\frac{(a + 3b)(a - b)}{2b^2 (a + b)^2}.
\end{align*}
Since $a<b<0$, we have
\begin{align*}
  (a - b)<0 , \, (a + 3b)<0, \quad\text{and}\quad 2b^2 (a + b)^2>0.
\end{align*}
It follows that
\begin{align*}
F'(b)=\frac{(a + 3b)(a - b)}{2b^2 (a + b)^2}>0.
\end{align*}
Therefore, $F(b)$   is increasing on $[b_0, 0)$, and the minimum is attained at $b = b_0$, namely
\begin{align*}
    \min_{a < b_0 \le b < 0} F(b) = F(b_0) = \frac{2}{a + b_0} - \frac{1}{2a} - \frac{1}{2b_0}.
\end{align*}
Hence, by applying this result to the minimization problem \eqref{eq: min prob 2}, we also obtain the optimal solution is given by $\bar B_z=0$. Thus we obtain the same reduced dynamics \eqref{redadd} for $z^{\rm red}$ \red{as when using the $L^\infty$-norm.}
\section*{Acknowledgments}
This work was carried out under the auspices of the Italian National Group of Mathematical Physics. The research of MC has been developed in the framework of the Research Project National Centre
for HPC, Big Data and Quantum Computing—PNRR Project, funded by the European Union—Next Generation EU. 
The research of HD was supported by an EPSRC Grant (EP/Y008561/1).

\bibliographystyle{abbrv}
\bibliography{refs}

\end{document}